\definecolor{gray}{rgb}{0.4, 0.4, 0.4}
\def\loris#1{}
\def\zak#1{}
\newcommand{\defeq}{\triangleq}
\newcommand{\sem}[1]{\llbracket #1 \rrbracket}
\newcommand{\tuple}[1]{\langle#1\rangle}
\newcommand{\true}{\textit{true}}
\newcommand{\false}{\textit{false}}
\newcommand{\powerset}[1]{2^{#1}}
\renewcommand{\phi}{\varphi}
\newcommand{\characters}{\mathfrak{D}_{\!\!\mathcal{A}}}
\newcommand{\predicates}{\Psi_{\!\!\mathcal{A}}}
\newcommand{\stdsafa}{\tuple{\mathcal{A},Q,p_0,F,\Delta}}
\newcommand{\lang}[2]{\mathcal{L}_{#1}(#2)}
\newtheorem{theorem}{Theorem}[section]
\newtheorem{lemma}[theorem]{Lemma}
\newtheorem{proposition}[theorem]{Proposition}
\newtheorem{definition}[theorem]{Definition}
\newcommand{\DOM}[1][{\,\,}]{\mathfrak{D}_{\!\!#1}}
\newcommand{\PRED}[1][{\,\,}]{\Psi_{\!\!#1}}
\newcommand{\den}[2][{}]{[\![#2]\!]_{#1}}
\newcommand{\SAT}{\mathit{IsSat}}
\newcommand{\PowerAlg}[1]{\mathbf{2}^{#1}}
\def\SFA{\textsc{s-FA}\xspace}
\def\SFAs{\textsc{s-FAs}\xspace}
\def\SAFA{\textsc{s-AFA}\xspace}
\def\SAFAs{\textsc{s-AFA}s\xspace}
\def\LTL{\textsc{LTL}\xspace}
\def\LTLf{\textsc{LTL-f}\xspace}
\def\pspacehard{\textsc{PSpace}-hard\xspace}
\def\pspaceco{\textsc{PSpace}-complete\xspace}
\newcommand{\join}{\sqcup}
\newcommand{\meet}{\sqcap}
\newcommand{\pbf}[1]{\mathcal{B}^+(#1)}
\newcommand{\simulate}[3]{#2#1#3}
\newcommand{\overar}[1]{\xrightarrow{#1}}
\begin{document}

\setlength{\pdfpageheight}{\paperheight}
\setlength{\pdfpagewidth}{\paperwidth}



\title{A Symbolic Decision Procedure for Symbolic Alternating Finite Automata}

\authorinfo{Loris D'Antoni}{University of Wisconsin-Madison}{ldantoni@wisc.edu}
\authorinfo{Zachary Kincaid}{Princeton University}{zkincaid@cs.princeton.edu}
\authorinfo{Fang Wang}{University of Wisconsin-Madison}{fang64@wisc.edu}

%

\maketitle

\begin{abstract}
We introduce Symbolic Alternating Finite Automata (\SAFA) as an expressive, succinct,
and decidable model for describing sets of finite sequences over 
arbitrary alphabets. 
Boolean operations over \SAFAs have linear complexity, which is in sharp contrast with
the quadratic cost of intersection and union for non-alternating symbolic automata.
Due to this succinctness, emptiness and equivalence checking are \pspacehard.

We introduce an algorithm for checking the equivalence of two \SAFAs  based on 
bisimulation up to congruence. This algorithm allows us to exploit the power
of SAT and SMT solvers to efficiently search the state space of the \SAFAs.
We evaluate our decision procedure on two verification and security  applications:
1) checking satisfiability of linear temporal logic formulas over finite traces, and
2) checking equivalence of Boolean combinations of regular expressions.
Our experiments show that our technique often outperforms existing techniques and 
it can be beneficial in both such applications.

\end{abstract}



\section{Introduction} \label{sec:introduction}

Programs that operate over sequences are ubiquitous and used for many
different tasks, such as  text processing~\cite{Alur15}, program monitoring~\cite{LeGuernic2007},
and deep packet inspection in networking~\cite{Smith08}. 
Being able to efficiently reason about these programs is a crucial task 
and many techniques have been proposed to do so. 
However, most of these techniques share a common link: 
they use finite automata and the related decision procedures.

Due to this reason, recently there has been renewed interest in automata theory,
especially in the fields of security and programming 
languages~\cite{Bonchi2013,DAntoni14,Angluin15,Mayr13,Smith08,DallaPreda15}.
Despite these improvements, existing automata-based decision procedures 
are not as advanced as other decision procedures such as SMT solvers yet.
For example, if one wants to use
classic automata techniques to check whether there exists a string $s$ that is
accepted by all regular expressions in a set
$\{r_1,\ldots,r_n\}$, she will incur in two problems.
\begin{itemize}
\item Regular expressions operate over large alphabets with thousands of characters
			and most existing automata formulations do not cope well with large alphabets.
\item Intersecting the automata corresponding to the given regular expressions
			produces an automaton with number of states exponential in the number of regular expressions.  Indeed, this problem is \pspaceco, but (as it happened in the world of SMT solvers) one might hope to find better solutions that rarely hit the worst-case complexity.
\end{itemize}

In this paper, we present a novel automata model together with a
decision procedure to address both such limitations.

\paragraph{Symbolic finite automata}
The problem of handling large alphabets is not a new one.
Recently a new model, called symbolic finite automata, has been proposed to address this limitation.
Symbolic Finite Automata (\SFAs)~\cite{Veanes10,DAntoni14} are finite state automata in which
the alphabet is given as a Boolean algebra that may operate over an infinite domain,
and transitions are labeled with first-order predicates over the algebra.
Although strictly more expressive than finite-state automata, 
\SFA are closed under Boolean operations and admit decidable equivalence,
as long as it is decidable to check satisfiability of predicates in the alphabet algebra.

\paragraph{Symbolic alternating finite automata}
While \SFAs provide an elegant framework for abstracting away the alphabet
structure, they have the same state complexities of classic finite automata.
In particular, repeated \SFA intersections can result in \SFAs with exponentially many states.
To solve this problem, we propose Symbolic Alternating Finite Automata (\SAFAs).
\SAFAs add alternation to \SFAs by allowing transitions to contain Boolean formulas that describe the set of
target states. For example, when an \SAFA is in state $p$ and is reading a string $s=a_1\ldots a_n$, the transition
$$p\overar{[a-z]}q_1\vee (q_2\wedge q_3)$$
specifies that $s$ is accepted from state $p$, if $a_1$ is a lower-case alphabetic character
and either the string $a_2\ldots a_n$ is accepted from state $q_1$
or it is accepted from both $q_2$ and $q_3$.
By adding alternation to \SFAs,
\SAFAs obtain Boolean operations with linear complexity, which is in sharp contrast with the quadratic 
intersection and exponential complementation of \SFAs.

\paragraph{Equivalence using bisimulation}
The succinctness of \SAFAs comes at a cost:
equivalence and emptiness are \pspaceco problems.
In the case of \SFAs, emptiness has linear complexity while equivalence is also \pspaceco.

In this paper, we propose a symbolic decision procedure for checking
equivalence (and emptiness) of two \SAFAs  and show that the procedure is effective in practice.
The algorithm extends to \SAFAs the
 \emph{bisimulation up to congruence} technique for solving
the language equivalence problem for nondeterministic finite automata recently
proposed by  Bonchi and Pous~\cite{Bonchi2013}.  
The algorithm belongs to a family of techniques
based on the principle that two configurations of an automaton recognize the
same language if and only if there is a bisimulation relation that relates
them.  Hopcroft and Karp's classical algorithm for checking equivalence of
deterministic finite automata is a member of the family that employs a
\emph{bisimulation up to} technique \cite{Hopcroft1971,Bonchi2013}.  Rather
than compute a bisimulation relation (which may be quadratic in the number of
configurations of the DFA), Hopcroft and Karp's algorithm computes a relation
$R$ that is a bisimulation \emph{up to} equivalence, in the sense that the
equivalence relation generated by $R$ is a bisimulation.  \emph{Bisimulation
  up to congruence} improves upon this technique by exploiting additional
structure on the configurations of a nondeterministic automaton (the
configurations of the NFA are finite disjunctions of states and if (say) if $a_1
\mathrel{R} b_1$ and $a_2 \mathrel{R} b_2$, then we may derive $(a_1 \lor a_2)
\mathrel{R} (b_1 \lor b_2)$.  

We extend this technique in two ways. 
First, we show how the
framework can be applied to alternating automata
by exploiting
the lattice structure on \SAFA configurations to compute a small relation that
generates a bisimulation, and by
using a propositional
satisfiability solver to compute the congruence closure.
Second, we give a technique for extending the algorithm to symbolic alphabets
by showing how to efficiently enumerate a set of represenative characters,
in a style reminiscent of the way that SAT solvers enumerate all
satisfying assignments to a propositional formula.

We implemented our algorithm and evaluated it on a comprehensive set of verification and security benchmarks.
First, we used \SAFAs to check satisfiability of more than 10,000 \LTL formulas appearing in~\cite{DeWulf2008} using the semantics of \LTL over finite traces from~\cite{DeGiacomo13} and compared our implementation against the tool Mona~\cite{Henriksen95}. 
Second, we used \SAFAs to check equivalence of Boolean combinations of complex regular expressions appearing in~\cite{regexlib}
and compare against existing solutions based on non-alternating finite automata.
Our experiments show that \SAFAs and our bisimulation technique 
often outperforms existing techniques and can be beneficial in both such applications.

\paragraph{Contributions.}
In summary our contributions are:
\begin{itemize}
\item Symbolic Alternating Finite Automata, \SAFAs, an automata model that can describe languages of strings operating over large
			and potentially infinite alphabets and for which Boolean operations have linear time complexity~(Section~\ref{sec:safa}.
\item An algorithm for checking equivalence of two \SAFAs, which integrates 
			bisimulation up to congruence with propositional SAT solving~(Section~\ref{sec:equivalence}).
\item A modular, open-source implementation of \SAFAs, which allows the users to provide
			custom definitions of the alphabet Boolean algebra, and an efficient implementation
			of our equivalence algorithm~(Section~\ref{sec:implementation})
\item A  comprehensive evaluation of our model and its decision procedures on more than 40,000 
		benchmarks from real world applications~(Section~\ref{sec:evaluation}).
\end{itemize}

\section{Motivating example} \label{sec:motivating}

In this section we present an application of \SAFAs and their
equivalence algorithm in the context of spam detection.

Spam detection is a notoriously hard task and spam filters
are continuously modified to either take into account novel malicious behaviours
or to relax existing assumption to handle overly restricting behaviours.
While machine learning is the typical choice for spam detection, certain companies
prefer using custom filters created using regular-expression-based black- and white-listing.
The number of such filters can be very large and
 redundant expressions that cover already considered behaviours are
often mistakenly added to the set of filters.
Efficiently processing all such expressions can become a complicated task and
it is therefore undesirable to avoid adding redundant filters 
to the list of processed ones.

For the sake of this example, we assume that a spam filter is given 
as a set of regular expressions $R=\{r_1,\ldots, r_n\}$
with the following string:
a string is an instance of the spam filter $R$ if it belongs to the language of each regular expression $r_i\in R$.
A simple example of a spam scenario is given in Fig.~\ref{fig:email-filters}.\footnote{The example is 
inspired from \url{https://theadminzone.com/threads/list-of-spam-email-addresses.27175/}.}

When a new spam filter $R'=\{r_1',\ldots, r_n'\}$ is added to the set of all spam filters
$S$, we might want to see whether there exists a spam filter $R''$ that is already in $S$ and that subsumes $R'$. 
Similarly, if we have a regular expression $W$ describing known good inputs, we might want to check that 
none of the strings in $W$ is classified as spam.
As is well known, we can perform these checks by
building the finite automaton corresponding to each regular expression
and by using the appropriate automata operations.
However, as mentioned in the introduction we face some problems.
\begin{itemize}
\item Regular expressions operate over very large alphabets with up to $2^{16}$ characters 
that can make classic automata operations highly impractical.
\item Repeated Boolean automata operations can cause an exponential blow-up in the number of states of the resulting automaton.
\item Checking equivalence of non-deterministic automata is a \pspaceco problem and in general requires automata
determinization.
\end{itemize}

\begin{figure}
\begin{tabular}{ l|c }
Contains \texttt{.ru} email &  \texttt{.*@.*\textbackslash .ru} \\
Contains the word \texttt{free} & \texttt{free} \\
Contains a Cyrillic character & \texttt{[U+0400–U+04FF]} \\
\end{tabular}
\caption{Regexes forming a spam scenario.\label{fig:email-filters}}
\end{figure}

\SAFAs, the model proposed in this paper explicitly addresses the first two issues and provides a practical
algorithm for solving the third one.
\SAFAs combine two existing
automata models.
\begin{description}[leftmargin=5pt]
\item[Symbolic finite automata] extend classic automata to large and potentially infinite alphabets by allowing
transitions to carry predicates over a given alphabet theory. For example, the third regular expression
in Figure~\ref{fig:email-filters} can be succinctly represented using the symbolic finite automaton with transitions
\[\begin{array}{ccc}
q_0\overar{\true} q_0&
q_0\overar{[U+0400–U+04FF]}q_1&
q_1\overar{\true} q_1
\end{array}\]
where $q_0$ is an initial state and $q_1$ is the only final state.
Transitions carry a predicates and can be traversed when the processed symbol is a model of the predicate. 
For example a $\true$-labeled transition can be traversed when reading any input character in the alphabet.
This model enables succinct representation of large input alphabets.

\item[Alternating finite automata]
extend classic automata by allowing Boolean operations to be performed at the transition level.
For example, assume we are given three finite automata  $A_1, A_2$, and $A_3$ for the regular expressions in Figure~\ref{fig:email-filters} 
with initial states $p_1, p_2$, and $p_3$. 
One can build an alternating finite automaton $A$ accepting the intersection
of the languages accepted by the three automata by simply 
taking the disjoint union of the three automata $A_1, A_2$, and $A_3$  and setting
$p_1\wedge p_2\wedge p_3$ as the initial state.
Informally, the initial state says that the alternating finite automaton $A$ 
 accepts a string $s$
iff $s$ is in the language of the state $p_1$, of the state $p_2$, and of the state $p_3$.
We defer the formalization of this concept to Section~\ref{sec:safa}. 
Thanks to this transition structure,
Boolean operations over alternating automata can be performed in linear time and space. These complexities are in
sharp contrast with those of classic automata where intersection has quadratic complexity and complementation has exponential complexity.
\end{description}

Our model combines these two models and can handle
large alphabets and perform Boolean operations in linear time and space. 
While \SAFA inherit the efficient Boolean operations from alternating finite automata, they also inherit the complexity of deciding equivalence and emptiness: both are \pspaceco.  (In fact these two problems are reducible to each other in linear time.)
However in this paper we propose an equivalence 
algorithm that can efficiently solve problems like the regular expression analysis described in this section on many practical instances
that existing models cannot handle.

We considered real regular expressions taken from~\cite{regexlib}
and we were able to prove equivalences of the form
$r_1\cap r_2\cap r_3 = r_4\cap r_5$ in milliseconds for instances for which
the classic decision procedure based on non-alternating automata
could not terminate in 20 seconds.
We expand on this evaluation in Section~\ref{sec:evaluation}.

\section{Symbolic alternating finite automata} \label{sec:safa}

This section gives a formal description of symbolic alternating finite
automata (\SAFA{}).  The two key features of \SAFAs{} are that (1) the alphabet
is symbolic (as in a symbolic finite automaton), and (2) the automaton may
make use of both existential and universal nondeterminism (as in an
alternating finite automaton).

As in an \SFA{}, the symbolic alphabet of an \SAFA{} is manipulated
algorithmically via an effective Boolean algebra.  An \emph{effective Boolean
  algebra} $\mathcal{A}$ has components
$(\DOM,\PRED,\den{\_},\bot,\top,\vee,\wedge,\neg)$.  $\DOM$ is 
a set of \emph{domain elements}.  $\PRED$ is a 
set of \emph{predicates} closed under the Boolean connectives and
$\bot,\top\in\PRED$.  The \emph{denotation function}
$\den{\_}:\PRED\rightarrow 2^{\DOM}$ is  such
that, $\den{\bot} = \emptyset$, $\den{\top} = \DOM$, for all
$\varphi,\psi\in\PRED$, $\den{\varphi\vee\psi} = \den{\varphi}\cup\den{\psi}$,
$\den{\varphi\wedge\psi} = \den{\varphi}\cap\den{\psi}$, and
$\den{\neg\varphi} = \DOM\setminus\den{\varphi}$. For $\varphi\in\PRED$, we
write $\SAT(\varphi)$ when $\den{\varphi}\neq\emptyset$ and say that $\varphi$
is \emph{satisfiable}.  In the following we will assume that $\SAT$ is a
computable function and that, for every
domain element $a\in \DOM$ and predicate $\varphi$,
it is decidable to check whether $a\in \den{\varphi}$.

 The intuition is that such an algebra is represented programmatically
 as an API with corresponding methods implementing the Boolean
 operations and the denotation function.  The following are examples
 of decidable effective Boolean algebras.
\begin{description}[leftmargin=5pt]
\item[$\PowerAlg{\textsc{bv}k}$]
is the powerset algebra whose domain
is the finite set $\textsc{bv}k$, for some $k>0$, consisting of all nonnegative
integers smaller than $2^{k}$, or equivalently, all $k$-bit
bit-vectors. A predicate is represented by a BDD of depth
$k$. The Boolean operations
correspond to the BDD operations and $\bot$ is the BDD
representing the empty set. The denotation $\den{\beta}$ of a BDD
$\beta$ is the set of all integers $n$ such that a binary
representation of $n$ corresponds to a solution of $\beta$.
\item[$\mathrm{SMT}^{\sigma}$]
is the decision procedure for a theory over some sort
$\sigma$, say integers, such as the theory of integer linear
arithmetic.  This algebra can be implemented through an interface to
an SMT solver.  $\PRED$ contains in this case the set of all formulas
$\varphi(x)$ in that theory with one fixed free integer variable $x$.
For example, a formula $(x\;{\mathrm{mod}}\; k) =
0$, say $\mathit{div}_k$, denotes the set of all numbers divisible
by $k$.  Then $\mathit{div}_2\wedge\mathit{div}_3$ denotes the
set of numbers divisible by six.
\end{description}

Nondeterministic automata generalize deterministic automata by allowing a
state to have multiple outgoing transitions labelled with the same character.
A word is accepted by the nondeterministic automaton when \emph{some} run
leads to an accepting state (i.e., choice is interpreted
\emph{existentially}).  One may naturally consider the dual interpretation of
choice, wherein a word is accepted when \emph{all} runs lead to an accepting
state (i.e., choice is interpreted \emph{universally}).  An \emph{alternating}
finite automaton supports both types of nondeterminism.  Nested combinations
of existential and universal choices can naturally be represented by positive
Boolean formulas.  Formally, for any set $X$, we use $\pbf{X}$ to denote the
set of \emph{positive Boolean formulas over $X$} (that is, Boolean formulas
built from $\true$, $\false$, and the members of $X$ using the binary
connectives $\land$ and $\lor$).

\begin{definition}[Symbolic alternating finite automaton]
  A \emph{symbolic alternating finite automaton} (\SAFA{}) is a tuple $M =
  \stdsafa$ where $\mathcal{A}$ is a decidable effective
  Boolean algebra, $Q$ is a finite set of states,  $p_0 \in \pbf{Q}$ is a positive Boolean formula over $Q$,
  $F \subseteq Q$ is a set of accepting states, and $\Delta \subseteq Q \times
  \Psi_{\mathcal{A}} \times \pbf{Q}$ is a finite set of transitions.
\end{definition}

An \SAFA{} over an effective Boolean algebra $\mathcal{A}$ recognizes a
language of words over the set of characters $\characters$, which we will
define presently.  Let $M = \stdsafa$ be an \SAFA{}.  We define a function
$\lang{M}{\cdot} : \pbf{Q} \rightarrow \powerset{\characters^*}$ mapping each
positive Boolean formula to the language accepted by that formula to be the
least function (in pointwise inclusion ordering) that satisfies:
\begin{align*}
  w \in \lang{M}{\true} & \hspace*{0.25cm} \text{always}\\
  w \in \lang{M}{\false} & \hspace*{0.25cm} \text{never}\\
  \epsilon \in \lang{M}{s} & \iff s \in F \\
  aw \in \lang{M}{s} & \iff \exists \tuple{s,\phi,q}\in\Delta \text{ s.t. } a \in \den{\phi} \land w \in \lang{M}{q}\\
  w \in \lang{M}{p \land q} & \iff w \in \lang{M}{p} \land w \in \lang{M}{q}\\
  w \in \lang{M}{p \lor q} & \iff w \in \lang{M}{p} \lor w \in \lang{M}{q}
\end{align*}
Finally, we define the language $\mathcal{L}(M)$ recognized by $M$ as
$\mathscr{L}(M) \defeq \lang{M}{p_0}$.

\zak{should this paragraph be moved to the related work?}
Unsurprisingly, the relationship between \SAFAs{} and \SFA{}s is analogous to
the relationship between AFAs and NFAs: \SAFAs{} and \SFAs{} recognize the
same family of languages, but converting an \SAFA{} with $n$ states to an \SFA{}
can require up to $2^n$ states.  Interestingly, the symbolic alphabet is
another source of complexity in the conversion from \SAFA to \SFA.  Consider an
\SAFA{} with two states $Q = \{x,y\}$ and two outgoing transitions per
state \[\Delta = \{ \tuple{x,\phi_1,x}, \tuple{x,\phi_2,y},
\tuple{y,\psi_1,y}, \tuple{y,\psi_2,\true} \}\] The states of the equivalent
\SFA can be identified with the positive Boolean formula over $Q$.  The state
$p \land q$ must have \emph{four} outgoing transitions, one for each
combination of the guards of $x$ and $y$ ($\phi_1 \land \psi_1$, $\phi_1 \land
\psi_2$, $\phi_2 \land \psi_1$, and $\phi_2 \land \psi_2$).  In general, for an
\SAFA{} with $n$ states each with $m$ outgoing transitions, the equivalent
\SFA{} can have states with up to $m^n$ outgoing transitions.

\subsection{Boolean operations on \SAFAs{}} \label{sec:bool}
One of the critical features of \SAFAs{} is that Boolean operations have linear
complexity in the number of states.  The constructions for \SAFA{} union,
intersection, and complementation follow the standard ones for AFA, with the
exception that \SAFA{} complementation (like \SFA{} complementation) requires
a preprocessing step.  For the sake of completeness, we will recall these
constructions below.

\begin{sloppypar}
Suppose that $M = \stdsafa$ and $M' =
{\tuple{\mathcal{A},Q',p_0',F',\Delta'}}$ are \SAFAs{} over the same effective
Boolean algebra and (without loss of generality) disjoint state spaces.  Their
union and intersection are defined simply by:
\begin{itemize}
\item $M\cup M' = {\tuple{\mathcal{A},Q \cup Q',p_0 \lor p_0',F \cup F',\Delta \cup \Delta'}}$
\item $M\cap M' = {\tuple{\mathcal{A},Q \cup Q',p_0 \land p_0',F \cup F',\Delta \cup \Delta'}}$
\end{itemize}
(i.e., the set of states, set of final states, and transitions of the
union/intersection \SAFAs{} are just the union of the component \SAFAs; they
differ only in the initial state, which is either the disjunction (union) or
conjunction (intersection) of the initial states of the components.
\end{sloppypar}

\begin{sloppypar}
The complement construction for an \SAFA{} $M = \stdsafa$ relies on $M$
satisfying the property that for all states $x \in Q$, the set $\{ \sem{\phi}
: \exists p. \tuple{x, \phi, p} \in \Delta \}$ forms a partition of
$\characters$.  An \SAFA{} that satisfies this condition is called
\emph{normal}.  Any \SAFA{} can be converted into an equivalent normal \SAFA{}
using the \emph{normalization} procedure pictured in \cref{alg:normalize}.
(The algorithm is similar to the one in \cite{McMillan2002} for computing all
satisfiable assignments to a propositional formula, and the one in
\cite{Veanes10lpar} for computing satisfiable Boolean combinations of a set of
predicates, and also the representative character enumeration algorithm that
we will present in the next section -- there it will be explained in greater
detail).  Normalization may (in the worst case) cause an exponential blow-up
in the number of outgoing transitions of any one state in an \SAFA{} (note, however, that the exponential factor does not depend on the number of states).
\end{sloppypar}

Assuming that $M$ is normal, the complement can be constructed by ``De
Morganization.'' We define the complement to be \[ \overline{M} \defeq
{\tuple{\mathcal{A},Q,\overline{p_0},Q \setminus F,\{ \tuple{x, \phi,
      \overline{p}} : \tuple{x,\phi,p} \in \Delta}}, \] where
$\overline{\cdot}$ denotes the positive Boolean formula tranformation that
replaces every $\land$ with $\lor$ (and vice versa).

\begin{algorithm}[!t]
  \SetAlgoLined\DontPrintSemicolon \SetKwFunction{algo}{algo}
  \SetKwFunction{proc}{proc} \SetKwInOut{Input}{Input}
  \SetKwInOut{Output}{Output} \SetKwProg{algorithm}{Algorithm}{}{}
  \SetKwProg{procedure}{Procedure}{}{}
  \procedure{$\textit{Normalize}(M)$}{ \Input{\SAFA{} $M =
      \tuple{\mathcal{A},Q,p,F,\Delta}$}
    \Output{Equivalent normal \SAFA{} M'}
    $\Delta' \gets \emptyset$\;
    \ForEach{$x \in Q$}{
      $\textit{chars} \gets \top$\;
      \While{$\textit{IsSat}(\textit{chars})$}{
        $a \gets \textit{Witness}(\textit{chars})$\;
        $p = \false$\;
        $\textit{class} \gets \top$\;
        \ForEach{$\tuple{x, \phi, q} \in \Delta$} {
          \eIf{$a \in \sem{\phi}$}{
            $\textit{class} \gets \textit{class} \land \phi$\;
            $p \gets p \lor q$
          }{
            $\textit{class} \gets \textit{class} \land \lnot \phi$\;
          }
        }
        $\textit{chars} \gets \textit{chars} \land \lnot \textit{class}$\;
        Add $\tuple{x, \textit{class}, p}$ to $\Delta'$\;
      }
    }
    \Return{$\tuple{\mathcal{A},Q,p,F,\Delta'}$}
  }
  \caption{Normalization algorithm for \SAFA{} \label{alg:normalize}}
\end{algorithm}

\subsection{An algebraic view of \SAFA{}}

This section describes \SAFA{} in a more algebraic style, which will be useful
in the next section.

A \emph{bounded lattice} $\mathcal{L} =
\tuple{L,\sqsubseteq,\join,\meet,\bot,\top}$ is a partially ordered set
$\tuple{L,\sqsubseteq}$ such that every finite set of elements has a least
upper bound and greatest lower bound.  For any pair of elements $x,y \in L$,
we use $x \join y$ to denote their least upper bound and $x \meet y$ to denote
the  greatest lower bound.  The least element of the lattice (the least upper
bound of the empty set) is denoted by $\bot$ and the greatest (the greatest
lower bound of the empty set) by $\top$.  We say that $\mathcal{L}$ is
\emph{distributive} if for all $a,b,c \in L$, we have $a \meet (b \join c) =
(a \meet b) \join (a \meet c)$.

Our first example of a bounded lattice is the (distributive) bounded lattice
of positive Boolean formulas.  Operating (as we do) under the assumption that
we do not distinguish between logically equivalent positive Boolean formulas,
for any set $X$, $\pbf{X}$ is a bounded lattice where the order is logical
entailment, the least upper bound is disjunction, the greatest lower bound is
conjunction, $\bot$ is $\false$, and $\top$ is $\true$.  A second important
example is the Boolean lattice $\mathbf{2} \defeq
\tuple{\{0,1\},\leq,\lor,\land,0,1}$ (which is also bounded and distributive).

Let $X$ be a set.  A \emph{model} over $X$ is a function $m : X \rightarrow
\mathbf{2}$ that assigns each $x \in X$ a Boolean value.  The model $m$ can be
extended to evaluate any positive Boolean formula by defining:
\begin{align*}
  m(\false) &\defeq 0\\
  m(\true) &\defeq 1\\
  m(p \land q) &\defeq m(p) \land m(q)\\
  m(p \lor q) &\defeq m(p) \lor m(q)\ .
\end{align*}
Thus, we say that the model $m : X \rightarrow \mathbf{2}$ \emph{extends
  uniquely to a lattice homomorphism} $\pbf{X} \rightarrow \mathbf{2}$.  In
fact, there is nothing special about the bounded lattice $\mathbf{2}$ in this
regard: if $\mathcal{L} = \tuple{L,\join,\meet,\bot,\top}$ is a bounded
lattice, then any function $f : X \rightarrow \mathcal{L}$ extends uniquely to
a lattice homomorphism $\pbf{X} \rightarrow \mathcal{L}$.\footnote{Succinctly,
  $\pbf{X}$ is the \emph{free} bounded distributive lattice generated by $X$.}
In the following, our notation will not distinguish between a function $f : X
\rightarrow \mathcal{L}$ and its extension.


Let $M = \stdsafa$ be an \SAFA{}.  The set $F \subseteq Q$ of final states
defines a model $F : Q \rightarrow \mathbf{2}$ over $Q$ as follows:
\[ F(s) \defeq \begin{cases}
  1 &\text{if } s \in F\\
  0 & \text{otherwise}\ .
\end{cases} \]
Note that for any $p \in \pbf{Q}$, we have $F(p) = 1$ if and only if
$\lang{M}{p}$ contains the empty word.  Any character $a \in \characters$ can
be associated with a transition \emph{function} $\Delta_a : Q \rightarrow
\pbf{Q}$, where
\[\Delta_a(s) \defeq \bigvee \{ q : \exists \phi \in \predicates. \tuple{s,\phi,q} \in \Delta \land a \in \sem{\phi} \}\ . \]
Recall that since $\Delta_a$ is a function into a bounded lattice (namely
$\pbf{Q}$ itself) it extends uniquely to a lattice homomorphism $\pbf{Q}
\rightarrow \pbf{Q}$.  Similarly, any word $w = a_1 ... a_n \in \characters^*$
can be associated with a transition function $\Delta_a : Q \rightarrow
\pbf{Q}$ where
\[\Delta_w \defeq \Delta_{a_n} \circ \dotsi \circ \Delta_{a_1}\ .\]

Finally observe that we can characterize the language recognized by an
\SAFA{} succinctly using the algebraic machinery described in this section:
for any $p \in \pbf{Q}$, we have
\[w \in \lang{M}{p} \iff F(\Delta_w(p)) = 1\ .\]

\section{Equivalence checking for \SAFAs{}} \label{sec:equivalence}

This section describes an algorithm for checking whether two symbolic
alternating finite automata recognize the same language.  Recently, Bonchi and
Pous introduced the \emph{bisimulation up to congruence} technique for solving
the language equivalence problem for non-deterministic finite automata
\cite{Bonchi2013}.  We extend this technique in two ways: (1) we show how the
framework can be applied to alternating automata, using a propositional
satisfiability solver to compute congruence closure; (2) we give a technique
for extending the technique to symbolic alphabets.


\subsection{Bisimulation up to congruence}
We will begin by recalling some of the details of bisimulation up to
congruence, adpated to our setting of symbolic alternating finite automata.

\begin{definition}[Bisimulation]
  Let $M = \stdsafa$ be an \SAFA{}, and let $R \subseteq \pbf{Q} \times \pbf{Q}$
  be a binary relation on positive Boolean formulas over $M$'s states.  We say
  that $R$ is a \textbf{bisimulation} if for all $p,q$ such that
  $\simulate{\mathrel{R}}{p}{q}$, we have
  \begin{itemize}
  \item \emph{Consistency}: $F(p) = F(q)$
  \item \emph{Compatibility}: For all $a \in \characters$,
    $\simulate{\mathrel{R}}{\Delta_a(p)}{\Delta_a(q)}$.
  \end{itemize}
\end{definition}

Consistency and compatibility are useful notions outside of the context of
bisimulations, so we will provide more general definitions.  For any relation
$R \subseteq X \times X$ and any function $f : X \rightarrow X$, we say that
$f$ is \emph{compatible} with $R$ if $\simulate{\mathrel{R}}{x}{y}$ implies
$\simulate{\mathrel{R}}{f(x)}{f(y)}$.  A function $f : X \rightarrow
\mathbf{2}$ is \emph{consistent} with $R$ if $\simulate{\mathrel{R}}{x}{y}$
implies $f(x) = f(y)$.  Clearly, compatible functions are closed under
composition, and the composition of a compatible function with a consistent
function is consistent.

The following proposition states the soundness and completeness of the
principle of bisimulations for language equivalence checking.
\begin{proposition} \label{prop:sound_and_complete}
  Let $M = \stdsafa$ be an \SAFA{}.
  \begin{enumerate}
  \item For any bisimulation $R \subseteq \pbf{Q} \times \pbf{Q}$ and any $p,
    q \in \pbf{Q}$ such that $\simulate{\mathrel{R}}{p}{q}$, we have
    $\lang{M}{p} = \lang{M}{q}$.
  \item The relation $\sim$ defined by \[p \sim q \iff \lang{M}{p} = \lang{M}{q}\] is a bisimulation.
  \end{enumerate}
\end{proposition}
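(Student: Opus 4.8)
The plan is to prove the two parts separately, both hinging on the algebraic characterization $w \in \lang{M}{p} \iff F(\Delta_w(p)) = 1$ established at the end of the previous section.

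For part (1), suppose $R$ is a bisimulation and $\simulate{\mathrel{R}}{p}{q}$. I want to show $\lang{M}{p} = \lang{M}{q}$, i.e., for every word $w$, $F(\Delta_w(p)) = 1 \iff F(\Delta_w(q)) = 1$. The key observation is that each $\Delta_a$ is compatible with $R$ (this is exactly the \emph{compatibility} clause, read for an arbitrary character $a$), and $F$ is consistent with $R$ (this is the \emph{consistency} clause). Using the general closure facts stated just before the proposition — compatible functions are closed under composition, and the composition of a compatible function with a consistent function is consistent — I conclude that $F \circ \Delta_w = F \circ \Delta_{a_n} \circ \dotsi \circ \Delta_{a_1}$ is consistent with $R$ for every word $w = a_1 \dotsi a_n$. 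The empty-word case is handled directly by consistency of $F$ itself, since $\Delta_\epsilon$ is the identity. Hence $\simulate{\mathrel{R}}{p}{q}$ gives $F(\Delta_w(p)) = F(\Delta_w(q))$ for all $w$, so $\lang{M}{p} = \lang{M}{q}$. The only mild care needed here is a short induction (on $|w|$) to justify that composing $\Delta_a$'s preserves compatibility step by step, but this is immediate from the stated closure properties.

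For part (2), I need to verify that $\sim$ satisfies the two defining clauses of a bisimulation. Suppose $p \sim q$, i.e., $\lang{M}{p} = \lang{M}{q}$. \emph{Consistency}: $F(p) = 1$ iff $\epsilon \in \lang{M}{p}$ (by the noted fact that $F(p)=1$ exactly when $\lang{M}{p}$ contains the empty word), and likewise for $q$; since the languages are equal, they agree on containing $\epsilon$, so $F(p) = F(q)$. \emph{Compatibility}: I must show $\Delta_a(p) \sim \Delta_a(q)$ for every character $a$, i.e., $\lang{M}{\Delta_a(p)} = \lang{M}{\Delta_a(q)}$. The bridge here is the identity $w \in \lang{M}{\Delta_a(p)} \iff aw \in \lang{M}{p}$, which follows from the semantics of prefixing a character (the $aw$ clause in the definition of $\lang{M}{\cdot}$) together with the fact that $\Delta_a$ is the lattice homomorphism collecting exactly the guarded successors. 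Given this, $\lang{M}{p} = \lang{M}{q}$ implies that $aw$ is in both or neither, hence $w \in \lang{M}{\Delta_a(p)} \iff w \in \lang{M}{\Delta_a(q)}$ for all $w$, giving $\Delta_a(p) \sim \Delta_a(q)$.

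The main obstacle is the lemma $w \in \lang{M}{\Delta_a(p)} \iff aw \in \lang{M}{p}$ needed in part (2): it is stated semantically only for atomic states $s$ (via the $aw \in \lang{M}{s}$ clause), but $\Delta_a$ and $\lang{M}{\cdot}$ must be shown to interact correctly on arbitrary formulas $p \in \pbf{Q}$. I expect to prove this by structural induction on $p$, using that both $\Delta_a$ and $\lang{M}{\cdot}$ are lattice homomorphisms (so the $\land$ and $\lor$ cases distribute through), with the base case being precisely the atomic-state clause of the semantics. This compatibility-of-homomorphisms step is the only place requiring real work; everything else is bookkeeping with the consistency/compatibility closure properties.
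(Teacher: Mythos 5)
The paper states this proposition without proof (it is asserted and used, with the consistency/compatibility closure facts and the algebraic characterization $w \in \lang{M}{p} \iff F(\Delta_w(p)) = 1$ stated just beforehand as its evident ingredients), and your argument is correct and is precisely the intended one: part (1) by observing that each $\Delta_a$ is compatible with $R$ and $F$ is consistent with $R$, so $F \circ \Delta_w$ is consistent with $R$ for every word $w$; part (2) via $F(p) = 1 \iff \epsilon \in \lang{M}{p}$ and the prefix identity $w \in \lang{M}{\Delta_a(p)} \iff aw \in \lang{M}{p}$. The only simplification worth noting is that the structural induction you flag as ``the main obstacle'' can be skipped: taking the paper's algebraic characterization as given, the prefix identity is immediate from $\Delta_{aw} = \Delta_w \circ \Delta_a$ (by the definition of $\Delta_w$ as a composition), since $w \in \lang{M}{\Delta_a(p)} \iff F(\Delta_w(\Delta_a(p))) = 1 \iff F(\Delta_{aw}(p)) = 1 \iff aw \in \lang{M}{p}$ --- the induction you describe is really the (also omitted) proof of that characterization itself, so placing it there keeps part (2) to pure bookkeeping.
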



We will now define bisimulation up to congruence for $\SAFAs{}$.

\begin{definition}[Congruence closure]
  Let $Q$ be a finite set and let $R \subseteq \pbf{Q} \times \pbf{Q}$ be a
  binary relation.  The \textbf{congruence closure} of $R$, denoted
  $\equiv_R$, is the smallest congruence relation that contains $R$.  That is,
  $\equiv_R$ is the smallest reflexive, transitive, and symmetric relation
  that contains $R$ and such that for all $p_1,p_2,q_1,q_2 \in \pbf{Q}$ such
  that $\simulate{\equiv_R}{p_1}{q_1}$ and $\simulate{\equiv_R}{p_2}{q_2}$, we
  have $\simulate{\equiv_R}{p_1 \land p_2}{q_1 \land q_2}$ and
  $\simulate{\equiv_R}{p_1 \lor p_2}{q_1 \lor q_2}$.
\end{definition}

\begin{definition}[Bisimulation up to congruence] \label{def:bisim_cong}
  \begin{sloppypar}
  Let $M = \stdsafa$ be an \SAFA{}, and let $R \subseteq \pbf{Q} \times
  \pbf{Q}$ be a binary relation.  We say that $R$ is a \emph{bisimulation up
    to congruence} if for all $p, q$ such that $\simulate{\mathrel{R}}{p}{q}$,
  we have
  \begin{itemize}
  \item \emph{Consistency:} $F(p) = F(q)$
  \item \emph{Compatibility:} For all $a \in \characters$, $\simulate{\equiv_R}{\Delta_a(p)}{\Delta_a(q)}$.
  \end{itemize}
  \end{sloppypar}
\end{definition}

Bisimulation up to congruence allows us to solve language equivalence queries
as follows: if $R$ is a bisimulation up to congruence such that
$\simulate{\mathrel{R}}{p}{q}$, then $p$ and $q$ recognize the same language.
This follows from \cref{prop:sound_and_complete} and the following:
\begin{restatable}{proposition}{bisimupto}
  \label{prop:bisimupto}
  Let $M = \stdsafa$ be an \SAFA{}.  Let $R \subseteq \pbf{Q} \times \pbf{Q}$
  be a binary relation on positive Boolean formulas over $M$'s states.  $R$ is
  a bisimulation up to congruence if and only if $\equiv_R$ is a bisimulation.
\end{restatable}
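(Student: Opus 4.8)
The plan is to prove the two implications separately, with the backward direction being essentially immediate and the forward direction carrying the real content.

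For the backward direction, suppose $\equiv_R$ is a bisimulation. Since $\equiv_R$ is by definition a congruence \emph{containing} $R$, we have $R \subseteq \mathord{\equiv_R}$. Hence whenever $p \mathrel{R} q$ we also have $p \equiv_R q$, and the two defining clauses of a bisimulation applied to $\equiv_R$ — consistency $F(p) = F(q)$ and compatibility $\Delta_a(p) \equiv_R \Delta_a(q)$ for all $a$ — are exactly the two conditions required for $R$ to be a bisimulation up to congruence. So this direction needs nothing beyond $R \subseteq \mathord{\equiv_R}$.

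For the forward direction, suppose $R$ is a bisimulation up to congruence; I want to show $\equiv_R$ is a bisimulation. Rather than induct on the inductive generation of $\equiv_R$, I would exploit two facts: (i) $\equiv_R$ is the \emph{smallest} congruence containing $R$, and (ii) $F : \pbf{Q} \to \mathbf{2}$ and each $\Delta_a : \pbf{Q} \to \pbf{Q}$ are lattice homomorphisms, as established in the algebraic view of \SAFAs{}. For consistency, consider the kernel $K \defeq \{(p,q) : F(p) = F(q)\}$. Because $F$ is a lattice homomorphism into $\mathbf{2}$, $K$ is a congruence: it is trivially reflexive, symmetric, and transitive, and closure under $\land,\lor$ follows from $F(p_1 \land p_2) = F(p_1) \land F(p_2)$ and $F(p_1 \lor p_2) = F(p_1) \lor F(p_2)$. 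The consistency clause of bisimulation up to congruence says precisely $R \subseteq K$, so by minimality of $\equiv_R$ we get $\mathord{\equiv_R} \subseteq K$, i.e.\ $F(p) = F(q)$ whenever $p \equiv_R q$. For compatibility, fix $a \in \characters$ and consider the preimage relation $S_a \defeq \{(p,q) : \Delta_a(p) \equiv_R \Delta_a(q)\}$. Since $\equiv_R$ is a congruence and $\Delta_a$ is a lattice homomorphism, $S_a$ is again a congruence: for instance if $(p_1,q_1),(p_2,q_2) \in S_a$ then $\Delta_a(p_1) \equiv_R \Delta_a(q_1)$ and $\Delta_a(p_2) \equiv_R \Delta_a(q_2)$, and congruence of $\equiv_R$ together with $\Delta_a(p_1 \land p_2) = \Delta_a(p_1) \land \Delta_a(p_2)$ (and the $\lor$ analogue) gives $(p_1 \land p_2, q_1 \land q_2) \in S_a$ and $(p_1 \lor p_2, q_1 \lor q_2) \in S_a$. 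The compatibility clause of bisimulation up to congruence says exactly $R \subseteq S_a$, so minimality of $\equiv_R$ again yields $\mathord{\equiv_R} \subseteq S_a$, which is precisely the statement $\Delta_a(p) \equiv_R \Delta_a(q)$ for all $p \equiv_R q$. As this holds for every $a$, $\equiv_R$ satisfies compatibility, and together with consistency it is a bisimulation.

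The main obstacle is choosing an argument that keeps the proof clean. The natural temptation is to induct on the rules generating $\equiv_R$, which forces one to juggle reflexivity, symmetry, transitivity, and both congruence rules simultaneously and re-establish consistency and compatibility at each. The key insight that avoids this is that \emph{both} target properties can be recast as membership in a congruence — consistency as $\mathord{\equiv_R} \subseteq K$ and compatibility as $\mathord{\equiv_R} \subseteq S_a$ — so the single universal property of $\equiv_R$ as the least congruence above $R$ discharges everything at once. The only place demanding care is verifying that $K$ and $S_a$ are genuinely congruences, and this rests entirely on $F$ and $\Delta_a$ being lattice homomorphisms, which is the payoff of the algebraic reformulation in the preceding subsection.
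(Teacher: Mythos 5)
Your proof is correct, but it takes a genuinely different route from the paper's. You argue purely lattice-theoretically: both bisimulation conditions for $\equiv_R$ are recast as inclusions of $\equiv_R$ into auxiliary congruences --- the kernel $K$ of $F$ and the preimages $S_a$ of $\equiv_R$ under $\Delta_a$ --- and then discharged at once by the universal property of $\equiv_R$ as the \emph{least} congruence containing $R$; the only verifications needed are that kernels of lattice homomorphisms into $\mathbf{2}$ and preimages of congruences under lattice homomorphisms are congruences, which is exactly where the homomorphism property of $F$ and $\Delta_a$ enters. The paper instead routes everything through the logical closure: it invokes \cref{prop:coincidence} (that $\textit{cl}(R)$ coincides with $\equiv_R$) to work with propositional models, proving consistency by observing that $F$ itself is a model of $\Phi(R)$, and proving compatibility contrapositively by taking a model $m \models \Phi(R)$ distinguishing $\Delta_a(p)$ from $\Delta_a(q)$ and showing that the composite homomorphism $m \circ \Delta_a$ is consistent with $R$ yet distinguishes $p$ from $q$. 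The trade-off: your argument is self-contained and more elementary --- it needs neither \cref{prop:coincidence} nor \cref{lem:into_2}, so the proposition could have been proved immediately after the definitions rather than deferred; the paper's argument, by contrast, reuses machinery it must develop anyway for the SAT-based congruence check, so in context it is shorter and ties the proposition directly to the computational representation of $\equiv_R$ via $\Phi(R)$. You also handle the (trivial) backward direction explicitly via $R \subseteq \mathord{\equiv_R}$, which the paper leaves implicit.
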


We delay the proof of this proposition to the next section, after we have
developed some technical machinery for checking whether a given relation is a
bisimulation up to congruence.

\subsection{Congruence checking}

First, we will address an algorithmic challenge: \emph{how may one check
  whether a given relation is a bisimulation up to congruence}?
Generating the congruence closure $\equiv_R$ explicitly is intractable, since
the cardinality of $\equiv_R$ may be double-exponentially larger than that of
$R$.  However, for the purpose of checking whether a relation $R$ is a
bisimulation up to congruence, we need only to be able to check membership
within the congruence closure.  Thus, we are interested in the CONGRUENCE
problem, which is stated as follows: given a finite set $Q$, a finite relation
$R \subseteq \pbf{Q} \times \pbf{Q}$, and two positive Boolean formulas $p,q
\in \pbf{Q}$, determine whether $p \equiv_R q$.  In the following, we will
show that the CONGRUENCE problem is NP-complete, but it can be solved in
practice by exploiting propositional satisfiability solvers.  Towards this
end, we define the \emph{logical closure} of a relation as follows:
\begin{definition} \label{def:logic-closure}
  Let $Q$ be a finite set and let $R \subseteq \pbf{Q} \times \pbf{Q}$ be a
  binary relation.  Define $\textit{cl}(R)$ as the \emph{logical closure} of $R$ as follows:
  \begin{align*}
    \Phi(R) &\defeq \bigwedge_{\simulate{R}{p}{q}} p \iff q\\
    \textit{cl}(R) &\defeq \{ \tuple{p,q} : \Phi(R) \models p \iff q \}.
  \end{align*}
\end{definition}

Observe that for any $R$, $p$, and $q$, we have
$\simulate{\mathrel{\textit{cl}(R)}}{p}{q}$ if and only if $\Phi(R) \land
\lnot (p \iff q)$ is unsatisfiable.  Thus, membership in $\textit{cl}(R)$
reduces to a propositional satisfiability problem.

Note that every propositional model $m : Q \rightarrow \mathbf{2}$ extends
uniquely to a bounded lattice homomorphism $\pbf{Q} \rightarrow \mathbf{2}$
(where for all $p \in \pbf{Q}$, $m(p) = 1 \iff m \models p$).  In light of
this, $m$ is a model of $\Phi(R)$ if and only if $m$ is consistent with $R$,
and $\simulate{\mathrel{\textit{cl}(R)}}{p}{q}$ if and only if $h(p) = h(q)$
for every bounded lattice homomorphism $h : \pbf{Q} \rightarrow \mathbf{2}$
that is consistent with $R$.  That is, $\textit{cl}(R)$ is the \emph{largest}
relation such that that every bounded lattice homomorphism that is consistent
with $R$ is consistent with $\textit{cl}(R)$.

The question now is what is the relationship between the congruence closure
and the logical closure.  We will show that they are identical.  First, a
lemma:
\begin{lemma} \label{lem:into_2}
  Let $\mathcal{L} = \tuple{L,\join,\meet,\bot,\top}$ be a finite bounded
  lattice.  For any two distinct elements $a, b$ in $L$, there is a
  homomorphism $f : \mathcal{L} \rightarrow \mathbf{2}$ such that $f(a) \neq
  f(b)$.
\end{lemma}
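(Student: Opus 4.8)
The plan is to realize each homomorphism $f : \mathcal{L} \to \mathbf{2}$ as the indicator function of a prime filter and to separate $a$ from $b$ by a filter generated from a single carefully chosen join-irreducible element. Before starting, I would flag that distributivity is essential here: for a genuinely non-distributive bounded lattice the statement fails (in the diamond $M_3$ no $\mathbf{2}$-valued homomorphism can tell apart the three pairwise-incomparable middle elements). This is harmless in context, since the lemma is only ever applied to lattices arising as quotients of the free distributive lattice $\pbf{Q}$, which are themselves distributive, so I will freely assume $\mathcal{L}$ is distributive.

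First I would reduce to a convenient case: since $a \neq b$, antisymmetry rules out $a \sqsubseteq b$ and $b \sqsubseteq a$ holding simultaneously, so after possibly swapping the two elements I may assume $a \not\sqsubseteq b$. Next I recall that an element $j \neq \bot$ is \emph{join-irreducible} when $j = x \join y$ forces $j = x$ or $j = y$, and I would prove by well-founded induction on the order (using only finiteness) that every element is the join of the join-irreducibles below it: an element $x$ that is neither $\bot$ nor join-irreducible splits as $x = y \join z$ with $y,z \sqsubsetneq x$, and the claim follows from the inductive hypotheses for $y$ and $z$. Applying this to $a$, if every join-irreducible $j \sqsubseteq a$ also satisfied $j \sqsubseteq b$ we would get $a = \bigsqcup\{j : j \text{ join-irreducible}, j \sqsubseteq a\} \sqsubseteq b$, contradicting $a \not\sqsubseteq b$; hence there is a join-irreducible $j$ with $j \sqsubseteq a$ and $j \not\sqsubseteq b$.

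With such a $j$ in hand, I would define $f(x) \defeq 1$ if $j \sqsubseteq x$ and $f(x) \defeq 0$ otherwise, i.e. the indicator of the principal filter $\uparrow j$, and check it is a homomorphism. Preservation of $\top$ is trivial, preservation of $\bot$ uses $j \neq \bot$, and preservation of $\meet$ is just the universal property of greatest lower bounds ($j \sqsubseteq x \meet y$ iff $j \sqsubseteq x$ and $j \sqsubseteq y$). The one substantive clause is that $f$ preserves $\join$: the reverse direction is immediate, and for the forward direction $j \sqsubseteq x \join y$ gives $j = j \meet (x \join y) = (j \meet x) \join (j \meet y)$ by distributivity, whence join-irreducibility yields $j = j \meet x$ or $j = j \meet y$, that is $j \sqsubseteq x$ or $j \sqsubseteq y$. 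Finally $f(a) = 1$ and $f(b) = 0$ by the choice of $j$, so $f(a) \neq f(b)$, as required.

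The main obstacle is concentrated entirely in the $\join$-preservation step: this is the only place distributivity enters, and it is exactly the step that collapses for non-distributive lattices. So the real content of the proof is not the construction itself but recognizing where the distributivity hypothesis (inherited from $\pbf{Q}$) must be invoked, and choosing $j$ to be join-irreducible so that irreducibility and distributivity together make $\uparrow j$ a prime filter.
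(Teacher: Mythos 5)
Your proof is correct and follows essentially the same route as the paper's: both arguments locate a join-irreducible element below $a$ but not below $b$ (you obtain it from the standard fact that in a finite lattice every element is the join of the join-irreducibles below it; the paper uses a minimal-counterexample argument — the two are interchangeable), and both take $f$ to be the indicator of the principal filter generated by that element. The substantive difference is that you actually carry out the verification that the paper dismisses with ``one may check,'' and in doing so you expose a genuine defect in the paper's statement: join-preservation of $f$ needs join-irreducible elements to be join-prime, which requires distributivity, and the lemma as stated is false for arbitrary finite bounded lattices — in $M_3$ there is no bounded-lattice homomorphism onto $\mathbf{2}$ at all, so its incomparable middle elements cannot be separated. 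Your repair (adding distributivity as a hypothesis) is exactly right and harmless where the lemma is used, since it is applied only to the quotients $\pbf{Q}/_{\!\equiv_R}$ of the free bounded distributive lattice, which are finite and distributive. So while the construction is the paper's, your write-up is the one that proves a true statement and makes visible precisely where the distributivity hypothesis enters.
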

\begin{proof}
  Without loss of generality, suppose that $a \not\sqsubseteq b$.  Let $c$ be
  a join-irreducible element $c$ of $L$ such that $c \sqsubseteq a$ and $c
  \not\sqsubseteq b$.  The existence of such a $c$ can be proved by
  contradiction: suppose there exists some $a \not\sqsubseteq b$ such that
  there is no join-irreducible element $c$ of $L$ such that $c \sqsubseteq a$.
  Then there exists a least such element $a$.  By assumption, $a$ is
  join-reducible so there exists $d$ and $d'$ such that $d \sqsubset a$, $d'
  \sqsubset a$, and $d \join d' = a$.  It cannot be the case that both $d
  \sqsubseteq b$ and $d' \sqsubseteq b$ (if so, then $d \join d' \sqsubseteq
  b$, but by construction we have $a = d \join d' \not\sqsubseteq b$).
  Suppose without loss of generality that $d \not\sqsubseteq b$.  By
  minimality of $a$, there is some join-irreducible element $c \sqsubseteq d$
  such that $c \not\sqsubseteq b$.  Since $d \sqsubset a$, we have $c
  \sqsubseteq a$ and we are done.

  Construct a function $f : L \rightarrow \mathbf{2}$ by defining
  \[ f(x) \defeq \begin{cases}
    1 & \text{if } c \leq x\\
    0 & \text{otherwise}
  \end{cases} \]
  One may check that $f$ is a bounded lattice homomorphism with $f(a) = 1$ and
  $f(b) = 0$.
\end{proof}

\begin{proposition} \label{prop:coincidence}
  Let $Q$ be a finite set and let $R \subseteq \pbf{Q} \times \pbf{Q}$ be a
  binary relation.  Then $\textit{cl}(R)$ coincides with $\equiv_R$.
\end{proposition}
\begin{proof}
  Clearly $\textit{cl}(R)$ is a congruence relation containing $R$, so
  $\equiv_R$ is a subset of $\textit{cl}(R)$.  It remains to show that
  $\textit{cl}(R)$ is a subset of $\equiv_R$, or equivalently that if $p$ and
  $q$ are \emph{not} related by $\equiv_R$, then they are not related by
  $\textit{cl}(R)$.

  Let $p$ and $q$ be such that $p \not\equiv_R r$.  Then the equivalence
  classes $[p]$ and $[q]$ are distinct in the quotient lattice
  $\pbf{Q}/_{\!\equiv_R}$.  By Lemma~\ref{lem:into_2}, there is a bounded
  lattice homomorphism $f : \pbf{Q}/_{\!\equiv_R} \rightarrow \mathbf{2}$ such
  that $f([p]) \neq f([q])$.  Then clearly $f \models \Phi(R)$ (viewing $f$ as
  a propositional model), but (since $f([p]) \neq f([q])$), $f \not\models p
  \iff q$.  Therefore, $\Phi(R) \not\models p \iff q$, and $p$ and $q$ are not
  related by $\textit{cl}(R)$.
\end{proof}

\cref{prop:coincidence} yields a simple candidate algorithm for the CONGRUENCE
problem: simply check whether the pair of positive Boolean formulas belongs to
the logical closure of a relation using a SAT solver.  The following
proposition states that we cannot hope for an asymptotically superior
algorithm.
\begin{proposition}
  CONGRUENCE is NP-complete.
\end{proposition}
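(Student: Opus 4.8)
The plan is to prove NP-completeness in the two usual parts, membership and hardness, in both cases exploiting \cref{prop:coincidence} to recast the congruence test as a single propositional satisfiability query over $\Phi(R)$. The substantive part is the hardness reduction, since membership is essentially the observation the paper already records after \cref{def:logic-closure}.

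For membership, I would appeal directly to that observation: by \cref{prop:coincidence}, $p \equiv_R q$ coincides with $\tuple{p,q} \in \textit{cl}(R)$, which holds exactly when $\Phi(R) \land \lnot(p \iff q)$ is unsatisfiable. Since $\Phi(R)$ is a conjunction of biconditionals between positive Boolean formulas and $p,q$ are themselves positive Boolean formulas, this query has size polynomial in $Q$, $R$, $p$, and $q$; a nondeterministic machine resolves it by guessing a truth assignment $m : Q \to \mathbf{2}$ and evaluating $\Phi(R)$, $p$, and $q$ under $m$ bottom-up in polynomial time. Hence the congruence test reduces to one propositional satisfiability query and lies in NP.

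For hardness I would reduce SAT, the crux being that positive Boolean formulas lack negation, so complementation must be simulated entirely through the relation $R$. Given a propositional formula $\psi$ over variables $x_1,\dots,x_n$, I take $Q = \{x_1,\bar x_1,\dots,x_n,\bar x_n\}$, adding a fresh state $\bar x_i$ intended to denote $\lnot x_i$, and let $R$ contain, for each $i$, the two pairs $\tuple{x_i \lor \bar x_i,\ \true}$ and $\tuple{x_i \land \bar x_i,\ \false}$. Every model $m$ of $\Phi(R)$ then satisfies $m(x_i \lor \bar x_i)=1$ and $m(x_i \land \bar x_i)=0$, which together force $m(\bar x_i) = 1 - m(x_i)$; conversely, each Boolean assignment to the $x_i$ extends to exactly one model of $\Phi(R)$. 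I then push all negations in $\psi$ down to its literals and replace each $\lnot x_i$ by the state $\bar x_i$, obtaining a positive Boolean formula $\hat\psi \in \pbf{Q}$ with $m(\hat\psi)=1$ precisely when the assignment underlying $m$ satisfies $\psi$. Thus $\psi$ is satisfiable iff some model of $\Phi(R)$ makes $\hat\psi$ true, i.e. iff $\hat\psi \not\equiv_R \false$, which is exactly the satisfiability query driving the membership argument. The construction is plainly polynomial-time computable, so SAT reduces to this query and the problem is NP-hard.

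The main obstacle is the correctness of the negation gadget: I must verify that the two $R$-constraints per variable pin down $m(\bar x_i) = 1 - m(x_i)$ in \emph{every} model of $\Phi(R)$, not merely permit it, so that the models of $\Phi(R)$ are in exact bijection with truth assignments to $x_1,\dots,x_n$ and the translation $\psi \mapsto \hat\psi$ is faithful. Care is also needed to keep the polarity of the reduction aligned with the witness form used in the membership step, so that satisfiable instances of $\psi$ correspond to exactly the instances on which the nondeterministic verifier accepts. With these points checked, membership in NP and NP-hardness combine to give NP-completeness.
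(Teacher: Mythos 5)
Your proposal is correct and takes essentially the same route as the paper: membership in NP via \cref{prop:coincidence}, which turns the test into a single propositional satisfiability query over $\Phi(R)$, and NP-hardness via the identical reduction from SAT that adds barred copies $\bar{x}_i$ of the variables and uses the pairs $\tuple{x_i \lor \bar{x}_i, \true}$ and $\tuple{x_i \land \bar{x}_i, \false}$ in $R$ to axiomatize negation, concluding that $\psi$ is satisfiable iff $\hat{\psi} \not\equiv_R \false$. Your explicit check that the gadget forces $m(\bar{x}_i) = 1 - m(x_i)$ in every model of $\Phi(R)$, and your NNF preprocessing (the paper simply assumes CNF input, where negations already sit on literals), merely spell out details the paper leaves implicit.
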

\begin{proof}
  Membership in NP follows immediately from \cref{prop:coincidence}.  We prove
  NP-hardness of CONGRUENCE by giving polytime reduction from SAT.  The key
  insight is that the relation $R$ can be used to axiomatize negation, so that
  arbitrary Boolean formulas can be encoded into positive Boolean formulas.

  Let $\phi$ be a Boolean formula in conjunctive normal form over a set of
  propositional variables $P$.  Form a new set of propositional variables \[Q
  \defeq P \cup \{ \overline{p} : p \in P \}\] consisting of the original
  propositional variables $P$ plus a disjoint set of ``barred'' copies,
  intended to represent negative literals.  Define a relation $R \subseteq
  \pbf{Q} \times \pbf{Q}$ as follows:
  \[R \defeq \{ \tuple{p \land \overline{p}, \false} : p \in P \} \cup \{ \tuple{p \lor \overline{p}, \true} : p \in P \} \]
  Finally, let $\widehat{\phi}$ be the formula obtained by replacing every
  negative literal $\lnot p$ with $\overline{p}$.  Then $\phi$ is satisfiable
  if and only if $\widehat{\phi} \not\equiv_R \false$.
\end{proof}

Finally, using the technical machinery we have developed in this section, we
will re-state and prove \cref{prop:bisimupto}.
\bisimupto*
\begin{proof}
 Consistency: Suppose that $\simulate{\equiv_R}{p}{q}$.  Since
    $\textit{cl}(R)$ coincides with $\equiv_R$, we have
    $\simulate{\mathrel{\textit{cl}(R)}}{p}{q}$, and thus $\Phi(R) \models p
    \iff q$.  Since $R$ is a bisimulation up to congruence, $F$ is consistent
    with $R$, and thus $F \models \Phi(R) \models p \iff q$.  Thus $F(p) =
    F(q)$.

Compatibility: Towards the converse, suppose that there exists
    positive Boolean formulas $p,q \in \pbf{Q}$ and a character $a \in
    \characters$ such that $\Delta_a(p) \not\equiv_R \Delta_a(q)$, and show
    that $p \not\equiv_R q$.

    Since $\Delta_a(p) \not\equiv_R \Delta_a(q)$, and $\textit{cl}(R)$
    coincides with $\equiv_R$, there exists a model $m : Q \rightarrow
    \mathbf{2}$ such that $m \models \Phi(R)$ but $m \not\models \Delta_a(p)
    \iff \Delta_a(q)$.  Since $m(\Delta_a(p)) \neq m(\Delta_a(q))$, it is
    sufficient to show that $m \circ \Delta_a$ is consistent with $R$ (since
    if $p$ and $q$ can be distinguished by a homomorphism consistent with $R$,
    then it cannot be the case that $p \mathrel{\textit{cl}(R)} q$, and thus
    $p \not\equiv_R q$).  Towards proving that $m \circ \Delta_a$ is
    consistent with $R$, let $r,s$ be such that
    $\simulate{\mathrel{R}}{r}{s}$, and prove that $m(\Delta_a(r)) =
    m(\Delta_a(s))$.  Since $R$ is a bisimulation up to congruence, we have
    $\Delta_a(r) \equiv_R \Delta_a(s)$.  Since $m$ is consistent with $R$, we
    have $m \models \Delta_a(r) \iff \Delta_a(s)$ and thus $m(\Delta_a(r)) =
    m(\Delta_a(s))$.
\end{proof}

\subsection{Equivalence algorithm}

We will now show how the theory of bisimulation up to congruence can be
leveraged in a decision procedure for \SAFA{} language equivalence.  The
algorithm addresses two challenges raised by bringing Bonchi and Pous' NFA
equivalence algorithm to bear on symbolic alternating finite automata: (1) how
to efficiently check membership in the congruence closure of a relation, and
(2) how to efficiently enumerate a sufficient finite set of characters on
which to verify the bisimuation conditions.

The equivalence decision procedure is pictured in Algorithm~\ref{alg:equiv}.
The idea is simple: given an \SAFA{} $M = \stdsafa$ and two positive Boolean
formulas $p_0$ and $q_0$, the algorithm attempts to synthesize a bisimulation
up to congruence $R$ such that $\simulate{\mathrel{R}}{p_0}{q_0}$ or show that
no such $R$ exists.

\paragraph{Checking congruence closure membership}
The algorithm implicitly maintains a relation $R$ such that any bisimulation
that contains $\tuple{p_0,q_0}$ \emph{must} contain $R$.  The congruence
closure of $R$ is represented using an incremental SAT solver.  An incremental
SAT solver $s$ internally maintains a \emph{context formula} (initially
$\true$) and supports two operations: $s.push(\phi)$ conjoins the constraint
$\phi$ to $s$'s context, and $s.isSat(\phi)$ checks whether the conjunction of
$\phi$ and $s$'s context is satisfiable.  The congruence closure of $R =
\{\tuple{p_1,q_1},...,\tuple{p_n,q_n}\}$ is represented by a SAT solver with
context $\Phi(R) = (p_1 \iff q_1) \land \dotsi \land (p_n \iff q_n)$
(cf. Definition~\ref{def:logic-closure} and
Proposition~\ref{prop:coincidence}).  We may add a pair $\tuple{p,q}$ to the
relation $R$ by calling $\emph{s}.push(p \iff q)$ and we may check whether a
given pair $\tuple{p,q}$ belongs to $\equiv_R$ by issuing the query $s.isSat(p
\iff q)$.

The relation $R$ is initialized to the singleton set containing the pair of
$\tuple{p_0,q_0}$ of formulas for which we wish to decide equivalence (line
4).  Recall that $R$ is a bisimulation up to congruence if and only if it
satisfies the \emph{consistency} and \emph{compatibility} conditions given in
Definition~\ref{def:bisim_cong}.  Thus, if at any point the relation $R$
contains a pair $\tuple{p,q}$ such that $F(p) \neq F(q)$ (i.e., $R$ fails the
consistency condition), the algorithm returns false (lines 22-23), having
proved that $\lang{M}{p_0} \neq \lang{M}{q_0}$.  Towards the consistency
condition, the algorithm maintains a \emph{worklist} such that every pair
$\tuple{p,q} \in R$ such that $\Delta_a(p) \not\equiv_R \Delta_a(q)$ for some
character $a \in \characters$ belongs to \emph{worklist}.  The algorithm
returns \emph{true} when \emph{worklist} is empty (equivalently, when the
consistency condition holds).

\paragraph{Enumeration of representative characters}
Each iteration of the main loop (lines 5-26) removes a pair $\tuple{p,q}$
from the worklist and adds pairs to $R$ that are implied by the membership of
$\tuple{p,q}$ in $R$ and the compatibility condition.  A naive way to do this
is to iterate over the alphabet $\characters$, and for each character $a \in
\characters$ add $\tuple{\Delta_a(p),\Delta_a(q)}$ to $R$ and \emph{worklist}
if $\Delta_a(p) \not\equiv_R \Delta_a(q)$.  However, iterating over the
alphabet is not effective because the set of characters may be infinite.

The algorithm overcomes this problem by iterating over a finite set of
\emph{representative} characters, such that if
$\simulate{\equiv_R}{\Delta_a(p)}{\Delta_a(q)}$ holds for all representative
characters $a$ then it holds for all characters.  One natural candidate for
the set of representative characters is to choose one member of each
equivalence class of the relation $\simeq$ defined by
\[a \simeq b \iff \forall x \in Q. \Delta_a(x) = \Delta_b(x)\ . \]
(That is, $a \simeq b$ if the transition functions $\Delta_a$ and $\Delta_b$
are equal).  Observe that any set $B$ containing one member of each
equivalence class is a valid choice for a representative set of characters:
\begin{enumerate}
\item $B$ is finite: recalling that for any character $a$, $\Delta_a$ is
  defined by
  \[\Delta_a(x) \defeq \bigvee \{ q : \exists \phi \in \predicates. \tuple{x,\phi,q} \in \Delta \land a \in \sem{\phi} \} \]
  and that $\Delta$ is a finite set, there are only finitely many equivalence
  classes of $\simeq$, and so $B$ is finite.
\item $B$ is representative: Suppose $\Delta_b(p) \equiv_R \Delta_b(q)$ for
  all $b \in B$, and let $a \in \characters$.  There is some $b \in B$ such
  that $a \simeq b$, so we have
  \[ \Delta_b(p) = \Delta_a(p) \equiv_R \Delta_a(q) = \Delta_b(q) \]
  and thus $\Delta_a(p) \equiv_R \Delta_a(q)$.
\end{enumerate}

A practical refinement of this idea is to employ an equivalence relation with
fewer equivalence classes.  Towards this end, for any set of states $S
\subseteq Q$, define an equivalence relation $\simeq_{S}$ on the set of
characters $\characters$ as follows:
\[a \simeq_{S} b \iff \forall s \in S. \Delta_a(s) = \Delta_b(s)\ . \]
Note that the relation $\simeq_Q$ coincides with $\simeq$.  The argument that
any set of equivalence class representatives of $\simeq$ is a valid set of
representative characters applies also to equivalence class representatives of
$\simeq_S$ for any set $S$ that contains every state appearing in $p$ or $q$.
Since $S$ is typically smaller than $Q$, this yields a smaller representative
set.

Algorithm~\ref{alg:equiv} iterates over the set of representative characters
(lines 7-26) by manipulating sets of characters symbolically via the
effective Boolean algebra $\mathcal{A}$.
The enumeration is reminiscent of the way that AllSat solvers enumerate
satisfying assignments to propositional formula \cite{McMillan2002}.
The variable $\textit{chars}$,
initially $\top$, holds a predicate representing the set of characters that
remain to be processed.  At each iteration of the loop, we select a character
$a \in \sem{\textit{chars}}$ that has not yet been processed (line 10), and
compute a predicate $\textit{class}$ representing its equivalence class in the
relation $\simeq_S$:
\begin{align*}
  \textit{class} &\defeq \bigwedge \{ \phi : \exists x, q. \tuple{x,\phi,q} \in \Delta \land a \in \sem{\phi}\}\\
&\hspace*{5pt} \land \bigwedge \{ \lnot\phi : \exists x, q. \tuple{x,\phi,q} \in \Delta \land a \notin \sem{\phi} \}\ .
\end{align*}
We then remove every character in $a$'s equivalence class from the set
\emph{chars} by conjoining $\textit{chars}$ with the negation of
$\textit{class}$ (line 21).  This ensures that on the next iteration of the
loop, we choose a character that is not equivalent to any character seen so
far (in the context of AllSat, $\lnot\textit{class}$ is sometimes called a
\emph{blocking clause}).

\newcommand{\schar}{c}
\paragraph{Illustrative example} Let $M = \stdsafa$ be an \SAFA{} 
over the theory of linear integer arithmetic
where there are five states $Q = \{ v, w, x, y, z \}$, all states are final, and the transitions are as follows:
\[ v \xrightarrow{\schar\leq0} x \lor y \hspace*{0.5cm} v \xrightarrow{\schar > 0} z \land w  \hspace*{0.5cm}
 w \xrightarrow{\schar\leq0} z \]
\[w \xrightarrow{\schar > 0} (y \lor x) \land v  \hspace*{0.5cm}
 x \xrightarrow{\schar = 1} v \hspace*{0.5cm} y \xrightarrow{\schar \neq 1} w \hspace*{0.5cm} z \xrightarrow{\true} v \]
Assume that we want to prove that the state $v$ is equivalent to the state $w$.
The algorithm initializes the relation $R$ to $\{\tuple{v,w}\}$, the worklist to $[\tuple{v,w}]$, and enters the main loop:
\begin{enumerate}
\item ($R = \{\tuple{v,w}\}$, $\textit{worklist} = [\tuple{v,w}]$) We pick
  $\tuple{v,w}$ off the worklist, and enter the inner loop:
  \begin{enumerate}
  \item ($R = \{\tuple{v,w}\}$, $\textit{worklist} = [~]$, $\textit{chars} =
    \top$) We compute a witness to the satisfiability of $\textit{chars}$ --
    this may be any integer, but let's suppose that we choose $0$.  We compute
    the equivalence class of $0$ to be $\textit{class} = \schar \leq 0$ and set
    $\textit{chars} \gets \textit{chars} \land \lnot (\schar \leq 0)$.  We add
    $\tuple{\Delta_0(v), \Delta_0(w)} = \tuple{x \lor y, z}$ to $R$ and the
    worklist.
    
  \item ($R = \{\tuple{v,w},\tuple{x \lor y, z} \}$, $\textit{worklist} =
    [\tuple{x \lor y, z}]$, $\textit{chars} = \lnot (\schar \leq 0)$) We generate $5$ as a
    witness to satisfiability of $\textit{chars}$.  We compute the equivalence
    class of $5$ to be $\textit{class} =\schar > 0$ and set $\textit{chars} \gets
    \textit{chars} \land \lnot (\schar > 0)$.  We find that \[\Delta_5(v) = z \land w
    \equiv_R (y \lor x) \land v = \Delta_5(w)\] and therefore, do not add
    $\tuple{\Delta_5(v),\Delta_5(w)}$ to $R$ or the worklist.
  \item ($R = \{\tuple{v,w},\tuple{x \lor y, z} \}$, $\textit{worklist} =
    [\tuple{x \lor y, z}]$, $\textit{chars} = \lnot (\schar \leq 0) \land \lnot (\schar > 0)$).
    We find that $\textit{chars}$ is unsatisfiable and exit the inner loop.
  \end{enumerate}
\item ($R = \{\tuple{v,w},\tuple{x \lor y, z} \}$, $\textit{worklist} =
  [\tuple{x \lor y, z}]$) We pick $\tuple{x \lor y, z}$ off the worklist, and
  enter the inner loop:
  \begin{enumerate}
  \item ($R = \{\tuple{v,w},\tuple{x \lor y, z} \}$, $\textit{worklist} = [~]$,
    $\textit{chars} = \top$): we compute $1$ as a witness of satisfiability of
    $\textit{chars}$. We compute the equivalence class of $1$ to be
    $\textit{class} = (\schar = 1)$ and set
    $\textit{chars} \gets \textit{chars} \land \lnot (\schar = 1)$.
    We find that \[\Delta_1(x \lor y) = v \lor \false
    \equiv_R v = \Delta_1(z)\] and therefore do not add $\tuple{\Delta_1(x \lor y),\Delta_1(z)}$
    to $R$ or the worklist.
  \item ($R = \{\tuple{v,w},\tuple{x \lor y, z} \}$, $\textit{worklist} = [~]$,
    $\textit{chars} = \lnot (\schar = 1)$) We generate $2$ as a witness to satisfiability
    of $\textit{chars}$.  We compute the equivalence class of $2$ to be
    $\textit{class} = \schar \neq 1$ and set
    $\textit{chars} \gets \textit{chars} \land \lnot (\schar \neq 1)$.  We find
    that \[\Delta_2(x \lor y) = \false \lor w \equiv_R v = \Delta_2(z) \] and
    therefore do not add $\tuple{\Delta_2(x \lor y),\Delta_2(z)}$ to $R$ or the worklist.
  \item ($R = \{\tuple{v,w},\tuple{x \lor y, z} \}$, $\textit{worklist} =
    [~]$,  $\textit{chars} = \lnot (\schar = 1) \land \lnot (\schar \neq 1)$).
    We find that $\textit{chars}$ is unsatisfiable and exit the inner loop.
  \end{enumerate}
\item ($R = \{\tuple{v,w},\tuple{x \lor y, z} \}$, $\textit{worklist} = [~]$)
  Since the worklist is empty, the algorithm terminates: $R$ is a bisimulation
  up to congruence containing $\tuple{v,w}$, so $\lang{M}{v} = \lang{M}{w}$.
\end{enumerate}

\zak{TODO: Where should we discuss the Reverse Hopcroft-Karp algorithm? Could
  be delayed to the experiments section, but is there value to presenting it
  earlier?}  \loris{how much do we want to say about it?}  \zak{if we mention
  something about the reverse SFA in the previous section, then I think we
  should wait for the experiments to talk about reverse Hopcroft-Karp.
  Reverse Hopcroft-Karp is in some sense the ``obvious'' algorithm to try, but
  it's only obvious to people who know a thing or two about HK and AFA.  We
  should just make it clear that reverse HK is a good choice for the control
  in our experiment.}

\begin{algorithm}[!t]
  \SetAlgoLined\DontPrintSemicolon \SetKwFunction{algo}{algo}
  \SetKwFunction{proc}{proc} \SetKwInOut{Input}{Input}
  \SetKwInOut{Output}{Output} \SetKwProg{algorithm}{Algorithm}{}{}
  \SetKwProg{procedure}{Procedure}{}{}
  \procedure{$\textit{IsEquivalent}(M,p_0,q_0)$}{ \Input{\SAFA{} $M =
      \tuple{\mathcal{A},Q,p,F,\Delta}$\\
      positive Boolean formulas $p_0,q_0 \in \pbf{Q}$}
    \Output{$\true$ if $\lang{M}{p_0} = \lang{M}{q_0}$, $\false$ otherwise}
    $\textit{s} \gets$ \textbf{new} solver\;
    $\textit{worklist} \gets [(p_0,q_0)]$\;
    $\emph{s}.push(p_0 \iff q_0)$\; \While{$\textit{worklist}$ is not empty}{
      Pick $(p,q)$ off \emph{worklist}\;
      $S \gets$ set of states in $p$ or $q$\;
      \tcc{$\sem{\emph{chars}}$ is the set of characters that remain to be processed}
      $\textit{chars} \gets \top$\;
      \While{$\textit{IsSat}(\textit{chars})$}{
        $a \gets \textit{Witness}(\textit{chars})$\;

        \tcc{Compute the transition function $\Delta_a$ and a predicate
          representing the equivalence class of $a$ in $\simeq_S$.}
        $\textit{class} \gets \true$\;
        $\Delta_a \gets \lambda x. \false$\;
        \For{$x \in S$, $\tuple{x, \phi, q} \in \Delta$} {
          \eIf{$a \in \sem{\phi}$}{
            $\textit{class} \gets \textit{class} \land \phi$\;
            $\Delta_a(x) \gets \Delta_a(x) \lor q$
          }{
            $\textit{class} \gets \textit{class} \land \lnot \phi$\;
          }
        }
        
        $p' \gets \Delta_a(p)$\;
        $q' \gets \Delta_a(q)$\;
        \tcc{Remove $a$'s equivalence class from \emph{chars}}
        $\textit{chars} \gets \textit{chars} \land \lnot \textit{class}$\;

        \If{$F(p') \neq F(q')$}{
          \Return{$\false$}
        }
        \tcc{If $p' \not\equiv_R q'$, add $\tuple{p',q'}$ to $R$}
        \If{$\emph{s}.isSat((p' \land \lnot q') \lor (\lnot p' \land q'))$}{
          Add $(p',q')$ to \emph{worklist}\;
          $\emph{s}.push(p \iff q)$\;
        }
      }
    }
    \Return{$\true$}
  }
  \caption{Equivalence algorithm for \SAFAs{} \label{alg:equiv}}
\end{algorithm}

\section{Implementation}
\label{sec:implementation}

We implemented \SAFAs and their decision procedure in an existing Java automata library.~\footnote{
Name and link omitted for double blind.}

The implementation provides an interface for specifying custom Boolean algebras for both
the alphabet theory and the positive Boolean  formulas over the automaton states
and it
can be easily integrated with externally specified alphabet theories.
To represent the positive Boolean formulas over the automaton states we implemented two algebras:
one which simply maintains the explicit Boolean representations of formulas (referred to as DAG in the experiments)
and one which instead maintains a BDD corresponding to each formula.  
We use the JDDFactory implementation in JavaBDD as our BDD library~(\url{http://javabdd.sourceforge.net/}).
To check membership of formulas to the congruence closure we use the SAT solver JSAT (\url{https://j-sat.com/tag/jsat/}).
Our implementation is open source.

\paragraph{Optimizations}
We implemented a few simple optimizations for improving the performance of our decision procedure.

First, whenever we construct an \SAFA, we remove all the states that are trivially non reachable from the initial state
or that cannot reach a final state.
Given a positive Boolean formula $f\in \pbf{Q}$, let $st(f)\subseteq Q$ be the set of states
appearing in $f$.
Given a \SAFA $M =  \stdsafa$,
we construct a graph $G_M=(V,E)$ with set of vertices $V=Q$,
and transitions $E=\{(s,s')\mid (s,\varphi,q)\in \Delta\wedge s'\in st(q)\}$.
We then remove from $M$ all the states that are not reachable from one of the states $s_0\in st(p_0)$ in $G_M$
and all the states that do not have a path to some state $s\in F$ in $G_M$.
In each positive Boolean formulas of $M$ we replace every removed state with $\false$.
The resulting \SAFA is equivalent to $M'$.

Second, note that Algorithm~\ref{alg:equiv} does not specify what data structure to use for the worklist.
Natural choices for such a data structure are a stack or a queue,
but none of these data structures leverages the fact that smaller formulas are more likely to 
generate better congruences. 
Instead, we implement the worklist using a priority queue which always extracts
the pair  of smallest size, where the size of a pair $(p,q)$ is given by the formula $|p|+|q|$.

\section{Evaluation} \label{sec:evaluation}

We evaluated the performance of our algorithms
on the following benchmarks.
\begin{enumerate}
\item We check satisfiability of the \LTL formulas appearing in~\cite{DeWulf2008} using the semantics of \LTL over finite traces from~\cite{DeGiacomo13}. 
\item We check equivalence of Boolean combinations of regular expressions appearing in~\cite{regexlib}.
\end{enumerate}

All experiments were run on an Intel Core i7 2.60 GHz CPU with 16 GB of RAM.

\subsection{Satisfiability checking for \LTL over finite traces}
\label{sec:eval-ltl}
Linear temporal logic (\LTL) plays a prominent role in program verification and its properties have been studied
extensively.
While the semantics of \LTL is typically defined over infinite strings, recently there has been
a lot of interest in the interpretation of \LTL over finite traces~\cite{DeGiacomo13} because 
this variant can be used in applications such as program monitoring.
We use \LTLf to refer to the interpretation of \LTL over finite traces.
\LTLf is as expressive as
first-order logic over strings, while checking satisfiability of \LTLf formulas is 
a \pspaceco problem, there exists a linear time translation from \LTLf formulas to 
alternating automata~\cite{DeGiacomo13}.
Similarly to what happens with regular \LTL,
this translation results in an alphabet of size exponential in the number of atomic proposition
appearing in the formula. 
Due to this reason, \SAFAs are a promising model for designing decision procedures for
of \LTLf.

In this section, we evaluate the performance of our algorithm on the tasks of checking satisfiability and equivalence of \LTLf
formulas using the linear time translation from \LTLf formulas to 
alternating automata proposed in~\cite{DeGiacomo13}
We first describe the set of considered formulas and then present one experiment for checking satisfiability
and one for checking equivalence.

\paragraph{Benchmark formulas}
We consider three sets of LTL formulas.
The first set (lift and lift\_b in the figures) contains 16 parametric formulas
describing a lift system of increasing complexity~\cite{Harding05}.
The second set (counter and counter\_l in the figures) contains 30 formulas describing counters for which satisfiability under
the infinite string semantics is notoriously hard~\cite{Rozier2007}. 
Interestingly, these formulas have exactly one model
under the infinite string semantics, while they are unsatisfiable under the finite string one.
The third set contains more than 10,000 random formulas 
that were created as part of the experimental evaluation in~\cite{Daniele99}.
The formulas have size varying between 10 and 100,
and number of atomic propositions varying between 2 and 4.
For the non-random formulas we set the timeout at 60 seconds,
while for the  random ones, given how many there are, we set the timeout at 5 seconds.

\subsubsection{Satisfiability checking}
We evaluate the performance of our algorithm for checking satisfiability of the \SAFAs
corresponding to the given \LTLf formulas.
We use the linear time translation from \LTLf to monadic second order logic (MSO) proposed
in~\cite{DeGiacomo13} to generate
MSO formulas equivalent to the \LTLf ones and compare
our implementation against Mona~\cite{Henriksen95},
a solver for  the monadic second order logic of one successor.\footnote{
In an early version of this experiment we compared against the tool
Alaska~\cite{DeWulf2008}, which checks for satisfiability of \LTLf formulas using
a BDD-based variant of alternating automata. 
After observing that Mona consistently outperformed Alaska, we decided to only report the comparison against
 Mona.
We do not compare against non-symbolic automata libraries as these would not support large alphabets. 
Moreover, most libraries only support NFAs~\cite{Bonchi2013}, which would force us to choose a way to encode the LTL formulas into NFAs. 
 }
In \LTLf the alphabet is the set of bitvectors of size $n$,
where $n$ is the number of atomic propositions appearing in the formula
and each bit indicates whether one of the atomic propositions is true or false.
The constructed \SAFAs will therefore be over the theory of bit-vectors and  
we use BDDs to describe predicates in such a theory.

For each formula we measure:
1) the runtime of Mona on the equivalent MSO formula (mona);
2) the runtime of computing the deterministic automaton accepting the reverse language of the \SAFA and checking its emptiness 
(reverse-SFA)\footnote{The \SAFA to \SFA conversion is doubly exponential in the worse case (this bound is tight), but constructing an \SFA 
recognizing the reverse language of an \SAFA yields an automaton that has only exponential size.};
3) the runtime of the bisimulation algorithm using a directed acyclic graph (i.e., hash-consed)
representation of positive Boolean expressions (bisim-DAG);
4) the runtime of the bisimulation algorithm using BDDs to represent positive Boolean expressions  (bisim-BDD).
The results for the non-random \LTL formulas are depicted in Figure~\ref{fig:ltlfinite-sat-nonrandom}, while
Figure~\ref{fig:ltlfinite-sat-random-minus} illustrates the difference in runtime between 
our bisimulation algorithm, reverse-\SFA, and Mona. 
Since the bisimulation algorithm that uses an explicit representation  of  positive Boolean expressions
is consistently faster than the one using BDDs, Figure~\ref{fig:ltlfinite-sat-random-minus} only shows results for the former.
In these graphs a point above 0 means that the bisimulation solver is faster than the other solver.
Finally, in the top part of Figure~\ref{fig:ltlfinite-timeouts} we show the number of times each solver timed out
when checking satisfiability of the 10,000 randomly generated formulas.

\begin{figure}[t]
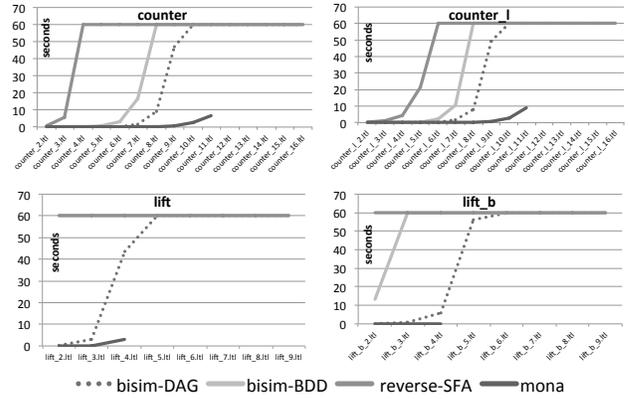
%
\centering
\begin{minipage}[b]{0.49\linewidth}
\includegraphics[width=0.99\textwidth]{counters.pdf}
\end{minipage}
\begin{minipage}[b]{0.49\linewidth}
\includegraphics[width=0.99\textwidth]{counter-l.pdf}
\end{minipage}
\begin{minipage}[b]{0.49\linewidth}
\includegraphics[width=0.99\textwidth]{lift.pdf}
\end{minipage}
\begin{minipage}[b]{0.49\linewidth}
\includegraphics[width=0.99\textwidth]{lift-b.pdf}
\end{minipage}
\begin{minipage}[b]{0.8\linewidth}
\includegraphics[width=0.99\textwidth]{legendltl.pdf}
\end{minipage}
\caption{
Satisfiability checking for non-random LTL formulas.
The missing points are instances for which Mona runs out of memory.
\label{fig:ltlfinite-sat-nonrandom}
}
\end{figure}

\begin{figure}[t]
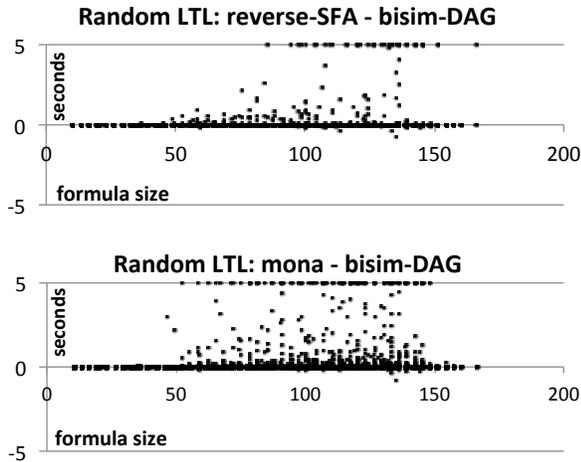
%
\centering
\begin{minipage}[b]{0.99\linewidth}
\includegraphics[width=0.99\textwidth]{emptiness-revminusbisim.pdf}
\end{minipage}
\begin{minipage}[b]{0.99\linewidth}
\includegraphics[width=0.99\textwidth]{emptiness-monaminusbisim.pdf}
\end{minipage}
\caption{
Satisfiability checking for randomly generated LTL formulas. 
\label{fig:ltlfinite-sat-random-minus}
}
\end{figure}

\paragraph{Results}
Mona outperforms the bisimulation solver on non-random formulas.
However,  Mona  
runs out of memory for relatively small instances for which our solver does not.
Mona is slower than our algorithm and times out more often on randomly generated \LTL formulas.
In particular the bisimulation algorithm outperforms Mona on 87\% of the instances.
The reverse-SFA algorithm also times out often and is in general slower than the bisimulation algorithm.
Explicitly representing positive Boolean expressions is generally
much faster than using BDDs. 
Even though our algorithm is mostly suited for equivalence and inclusion checking, 
this experiments illustrates that \SAFAs and our bisimulation technique are a viable
solution for checking satisfiability of \LTLf formulas.

\subsubsection{Equivalence checking}
We evaluate the performance of our algorithm for checking the equivalence of the \SAFAs
corresponding to the given \LTLf formulas against slight modifications of such automata.
Given an \LTLf formula $\varphi$, let $A_\varphi$ be the corresponding \SAFA.  
For each formula $\varphi$
in the benchmark, we compute a variation $A_\varphi'$ of the automaton $A_\varphi$ 
by randomly flipping the acceptance condition of one of the states in $A_\varphi$.
We then measure the cost of checking the equivalence of $A_\varphi$ with $A_\varphi'$.
In this experiment, we do not consider Mona as there is no natural way to generate a formula corresponding
to the modified \SAFAs.
Given the \emph{random} nature of our experiment we only consider the 10,000 randomly generated \LTL formulas from~\cite{Daniele99}.
The results are depicted in Figure~\ref{fig:ltlfinite-eq-random}.
The graph shows the time difference between the reverse-SFA algorithm and the bisimulation algorithm.
Again we only plot the case in which positive Boolean expressions are represented explicitly rather than with BDDs.\.
The bottom graph in the Figure~\ref{fig:ltlfinite-timeouts} shows the number times each solver timed out.

\begin{figure}[t]%
\centering
\begin{minipage}[b]{0.99\linewidth}
\includegraphics[width=0.99\textwidth]{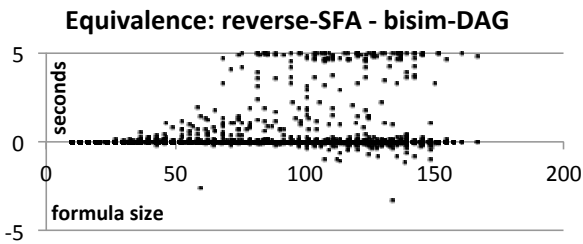}
\end{minipage}
\caption{
Equivalence checking for randomly generated LTL formulas.
\label{fig:ltlfinite-eq-random}
}
\end{figure}

\begin{figure}[t]
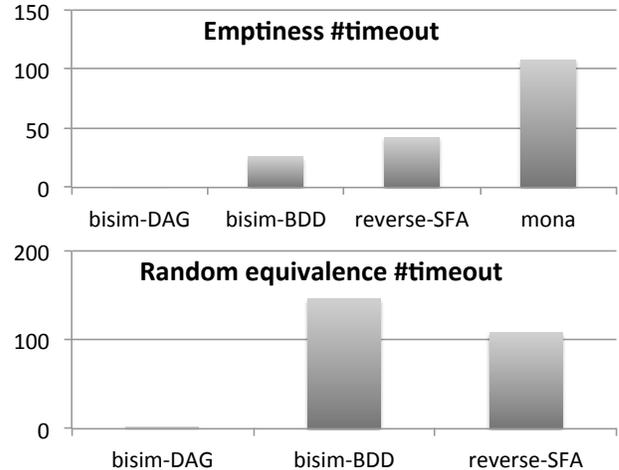
%
\centering
\begin{minipage}[b]{0.99\linewidth}
\includegraphics[width=0.99\textwidth]{emptiness-to.pdf}
\end{minipage}
\begin{minipage}[b]{0.99\linewidth}
\includegraphics[width=0.99\textwidth]{raneq-to.pdf}
\end{minipage}
\caption{
Number of timeouts for different solver for checking emptiness and equivalence of all the random \LTL formulas.
\label{fig:ltlfinite-timeouts}
}
\end{figure}

\paragraph{Results}
The bisimulation algorithm is again faster than the reverse-SFA algorithm and times out in only one instance.
In this experiment representing positive Boolean expressions using BDDs incurs in more timeouts than 
those observed using the reverse-SFA algorithm. We believe that the slow performance of BDDs
is due to the many substitution operations---a slow operation for BDDs---needed by
the equivalence algorithm.

\subsection{Boolean combinations of regular expressions}
\label{sec:eval-regex}

Regular expressions are ubiquitous and their analysis is fundamental in many domains,
from deep-packet inspection in networking~\cite{Smith08} to
static analysis of string-manipulating programs~\cite{Alur15}.
Classic automata techniques for analyzing regular expressions are often limited by two factors.
First, regular expressions operate over large alphabets, making
most existing automata representations impractical. 
Second, Boolean combinations of regular expressions produce automata with large number of states.
Since our model promises to attenuate both these problems,
in this experiment  we ask the following question.
\emph{Is our technique more efficient than classic automata techniques when analyzing properties involving Boolean
combinations of regular expressions?}
In this experiment we use \emph{unions of intervals} to represent predicates in the alphabet theory.
This representation naturally models character classes which often appear in regular expressions---e.g.,
[a-z0-9]. 
We stress how the ability to easily change the representation of the underlying alphabet illustrates the versatility
of \SAFAs.

We first describe the set of considered regular expression and then describe what experiments
we evaluate our techniques on.
\begin{figure*}[t]
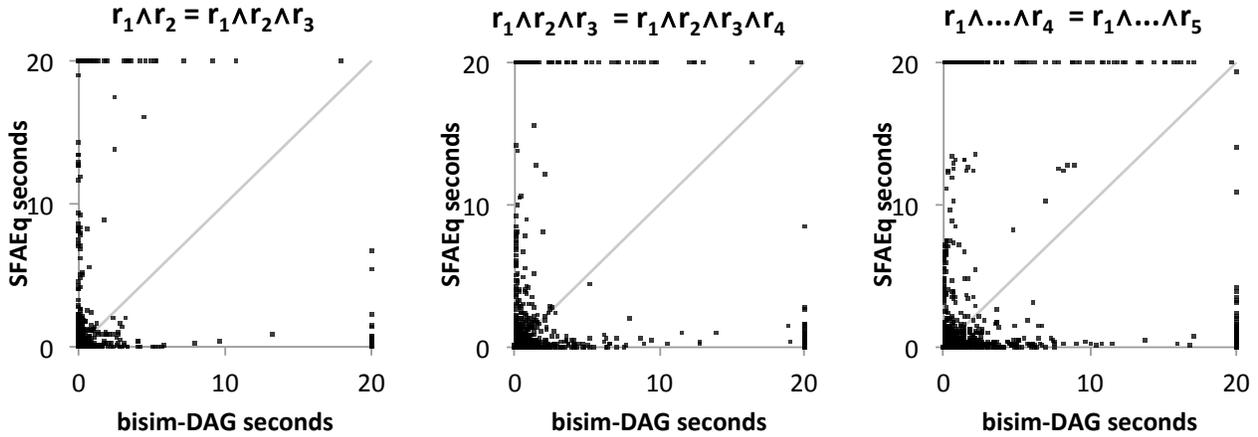
%
\centering
\begin{minipage}[b]{0.32\linewidth}
\includegraphics[width=0.99\textwidth]{regex23ineq-scatter.pdf}
\end{minipage}
\begin{minipage}[b]{0.32\linewidth}
\includegraphics[width=0.99\textwidth]{regex34ineq-scatter.pdf}
\end{minipage}
\begin{minipage}[b]{0.32\linewidth}
\includegraphics[width=0.99\textwidth]{regex45ineq-scatter.pdf}
\end{minipage}
\caption{
Running times for checking $L(r_1)\cap\ldots\cap L(r_n)=L(r_1)\cap\ldots\cap L(r_{n}) \cap L(r_{n+1})$
for $n\in\{2,3,4\}$. 
\label{fig:regex-inequiv}
}
\end{figure*}
\paragraph{Benchmark formulas}
We consider regular expressions from~\cite{regexlib}. 
This site contains more than 3,000 crowd-sourced regular expressions for tasks such as email filtering, phone number detection,
and URL detection.
From these expressions, we isolate 75 regular expressions for email filtering and consider Boolean combinations of them.
For each experiment we set the timeout at 20 seconds.

\subsubsection{Equivalence checking}
We evaluate the performance of our algorithm 
on the task of checking equivalence of intersected regular expressions.
This experiment is inspired by the application described in Section~\ref{sec:motivating},
where we were interested in detecting whether a newly added spam filter is already present
in a set of existing spam filters.
We identify sets of regular expressions $\{r_1,\ldots,r_n\}$ such that 
$L(r_1)\cap\ldots \cap L(r_{n})\neq \emptyset$ and $n\in \{3,4,5\}$.
Here $L(r)$ denotes the set of strings accepted by $r$.
For each set of expressions we then measure the time required to check whether
$L(r_1)\cap\ldots \cap L(r_{n-1})=L(r_1)\cap\ldots \cap L(r_{n})$. 
We only illustrate instances on which at least one solver doesn't timeout and, for
each value of $n$, we stop the generation at 6,000 sets.
In this experiment we compare our algorithm against the classic decision procedure based on finite automata intersection,
determinization, and equivalence. Concretely, for each 
set of
regular expressions 
 we build the corresponding (non-alternating)
symbolic finite automata (\SFAs), then perform automata intersection, determinize the two automata corresponding to the left-hand and right-hand sides of the equality,
and finally use Hopcroft-Karp algorithm~\cite{Hopcroft1971} to check the equivalence of the resulting automata.~\footnote{For
a fair comparison we consider symbolic finite automata instead of classical automata, as the latter would suffer
from the large alphabet size.}
When measuring the running time of
this procedure we consider the cumulative cost of all operations.

For the bisimulation experiment, we take advantage of the fact that
our algorithm can also check the equivalence of two configurations of the same \SAFA.
In particular, instead of building two \SAFAs and check whether they accept the same language,
we only build one \SAFA 
and check the equivalence of the two state configurations corresponding to
$L(r_1)\cap\ldots\cap L(r_{n})$
and
$L(r_1)\cap\ldots\cap L(r_{n}) \cap L(r_{n+1})$.
Given a set of regular expressions $\{r_1,\ldots,r_n\}$,
let $\{A_1,\ldots,A_n\}$ be the corresponding non-deterministic \SFAs (notice that an \SFA is also an \SAFA)
with corresponding initial states $\{q_0^1,\ldots, q_0^n\}$.
After building the intersected \SAFA $A=A_1 \cap \ldots \cap A_n$
with initial state $q_0^1\wedge \ldots\wedge q_0^n$, we  check whether
the state
$q_0^1\wedge \ldots\wedge q_0^n$ is equivalent to the state
$q_0^1\wedge \ldots\wedge q_0^{n-1}$. 

Figure~\ref{fig:regex-inequiv} illustrates the runtime of 
the classic algorithm based on \SFA equivalence (SFAEq) and our bisimulation based 
equivalence procedure (bisim-DAG) for sets of regular expressions of different sizes.
Given the slowdown observed in Section~\ref{sec:eval-ltl}, in this experiment we do not 
measure the performance of the bisimulation equivalence that uses BDD to represent
positive Boolean expressions.
A point above the diagonal indicates an instance
in which the bisimulation algorithm is \emph{faster} than performing \SFA  equivalence.
The first three entries of Figure~\ref{fig:regex-eq-timeout} show the number of instances
on which each algorithm timed out.

\begin{figure}[t]%
\centering
\begin{minipage}[b]{0.99\linewidth}
\includegraphics[width=0.99\textwidth]{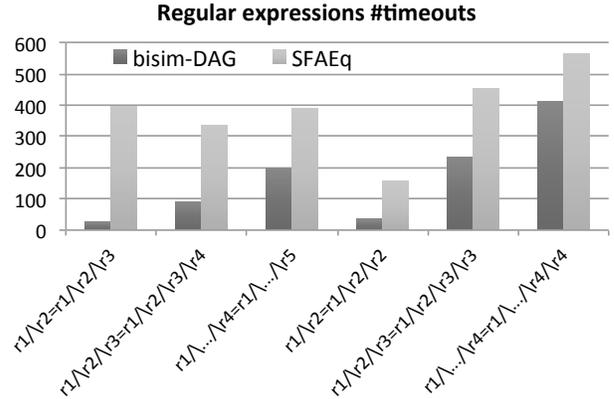}
\end{minipage}
\caption{
Timeout distribution for regular expression equivalence checks.
\label{fig:regex-eq-timeout}
}
\end{figure}

\begin{figure*}[!t]
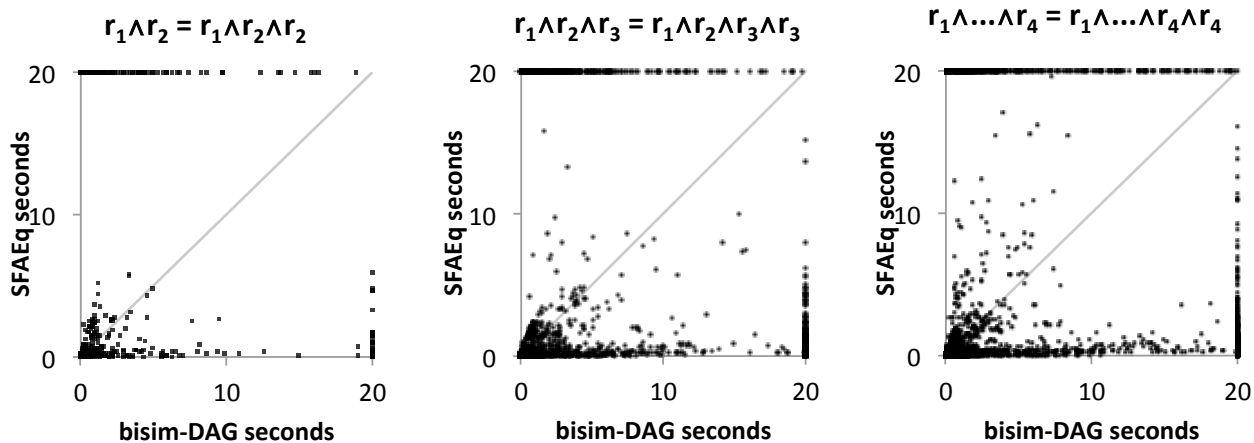
%
\centering
\begin{minipage}[b]{0.32\linewidth}
\includegraphics[width=0.99\textwidth]{regex23eq-scatter.pdf}
\end{minipage}
\begin{minipage}[b]{0.32\linewidth}
\includegraphics[width=0.99\textwidth]{regex34eq-scatter.pdf}
\end{minipage}
\begin{minipage}[b]{0.32\linewidth}
\includegraphics[width=0.99\textwidth]{regex45eq-scatter.pdf}
\end{minipage}
\caption{
Running times for checking $L(r_1)\cap\ldots\cap L(r_n)=L(r_1)\cap\ldots\cap L(r_{n}) \cap L(r_{n})$
for $n\in\{2,3,4\}$. 
\label{fig:regex-equiv}
}
\end{figure*}

\paragraph{Results}
Remarkably, the two algorithms have orthogonal performances and none of the two strictly outperforms
the other one. The bisimulation algorithm is faster than the \SFA algorithm on approximately 45\% of the instances,
but it times out 803 fewer times.
The results are truly encouraging and show that each algorithm has its benefits.

\subsubsection{Forced equivalence checking}
In the previous experiment almost all tuples return inequivalent as the result.
To better appreciate the cost of  checking equivalence in the case in which both sides of the equality describe
the same language we perform the following experiment. 
For every tuple
$\{r_1,\ldots,r_n\}$ ($n\in\{2,3,4\}$) such that 
$L(r_1)\cap\ldots\cap L(r_n)\neq \emptyset$, we measure the time required to check whether
$L(r_1)\cap\ldots\cap L(r_n)=L(r_1)\cap\ldots\cap L(r_{n}) \cap L(r_n)$,
where one of the regular expression is added twice.\footnote{
In our implementation,
to make the comparison fair and make the computation of the bisimulation non-trivial, we create an isomorphic copy of the automaton 
for the last formula rather than re-using it. 
Thus, the bisimulation formula corresponding to the equivalence of the initial states
has the shape $q_1\land \ldots \land q_n\iff q_1\land \ldots \land q_n\land q_n'$, where $q_n'$ is the start state of an automaton disjoint from the one for 
$q_n$.}
We only illustrate instances on which at least one solver doesn't timeout and, for
each value of $n$, we stop the generation at 6,000 sets.
The results are showed in Figure~\ref{fig:regex-equiv}.
Figure~\ref{fig:regex-equiv} illustrates the running times of this experiment,
while the last three entries of Figure~\ref{fig:regex-eq-timeout} show the number of instances
on which each algorithm timed out.

To better asses the effectiveness of the congruence in pruning the space of explored states
we also measured how many states each algorithm explored during the equivalence check.
The results are shown in Figure~\ref{fig:regex-explored-states}.
\begin{figure}[t]%
\centering
\begin{minipage}[b]{0.79\linewidth}
\includegraphics[width=0.99\textwidth]{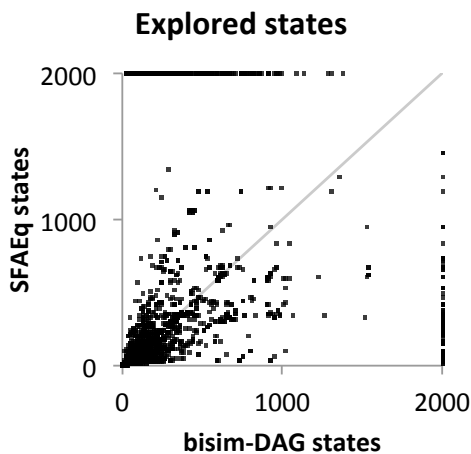}
\end{minipage}
\caption{
Number of explored states of each algorithm when performing checks of the form
$L(r_1)\cap\ldots\cap L(r_n)=L(r_1)\cap\ldots\cap L(r_{n}) \cap L(r_n)$.
The graph shows the cumulative result for $n\in\{2,3,4\}$.
The points at coordinate 2,000 indicate timeouts.
\label{fig:regex-explored-states}
}
\end{figure}

\paragraph{Results}
Again, the two algorithms have orthogonal performances and none of the two strictly outperforms
the other one. However, this time, the 
bisimulation algorithm is faster than the \SFA algorithm on approximately 20\% of the instances,
but it times out 492 fewer times.
We find quite remarkable how the two algorithms have complementary performances.
When we combine more than 30,000 equivalence checks performed in this section,
we have 2,292 instances in which the bisimulation algorithm terminated while the \SFA equivalence
timed out
and
997 instances in which  the \SFA equivalence
terminated but the bisimulation algorithm timed out.
This experiment shows that our algorithm is useful for analyzing regular expressions
and it will be  a great addition to practical regular expression engines. 
Moreover, we could also appreciate how, in the forced equivalence experiment,
 the bisimulation procedure explored fewer states than the algorithm for checking \SFA equivalence
 approximately
 46\% of the times.
These are typically the instances for which the bisimulation algorithm is the faster one.

\section{Applications and future directions} \label{sec:applications}

The development of the theory of \SAFAs
is motivated by several concrete practical problems. Here
we discuss six such applications. In each case we illustrate what
kind of alphabet theory the automaton operates with, the role of alternation,
and how our decision procedure could be adopted.

\subsection{Regular expression analysis}

We discussed how regular expressions are used in many different contexts.
In particular, the problems of checking equivalence and inclusion arise in many practical applications,
such as text processing and analysis of string-manipulating programs~\cite{Alur15,Veanes10}.
In Section~\ref{sec:eval-regex} we empirically showed that \SAFAs are an effective model for 
checking equivalence of complex combinations of regular expressions.
Moreover,  unlike classic models like DFAs and NFAs, \SAFAs can 
use interval arithmetic or BDDs to 
succinctly represent the complex alphabet structure of real-world regular expressions.

\subsection{Model-checking \LTL over finite traces}
In Section~\ref{sec:eval-ltl} we showed how \SAFAs can effectively check satisfiability of \LTL formulas
interpreted over finite sequences. Part of this success was enabled by the BDD representation of the input alphabet.
While we mostly focused on checking emptiness of the \SAFAs generated from the \LTLf formulas,
another promising option is that of using our algorithm to model-check transitions systems
against \LTLf formulas~\cite{Cimatti99}.
Intuitively, one could model a transition system as an \SAFA $T$ and
the \LTLf property as another \SAFA $P$. 
The model-checking problem then amounts to whether the language of 
the transitions system $T$ is included in the
language of the property $P$.
We can reduce this check to an equivalence query of the form
$T\vee P= P$. 
Exploring whether our algorithm could improve on the state of the art model checkers is an intriguing research direction.

\subsection{SMT solving with sequences}
SMT solvers such as~\cite{DeMoura08} have drastically changed the world of programming languages and turned previously unsolvable
problems into feasible ones.
The recent interest in verifying programs operating over sequences has created a need for extending
existing SMT solving techniques to handle sequences over complex theories~\cite{Veanes10lpar}.
Solvers that are able to handle strings, typically do so by building automata and then performing complex operations over such 
automata~\cite{Liang2014}. As we advocated in this paper, this approach is doomed to incur into a state blow-up as the
SMT formulas typically contain many Boolean operations.
Moreover, existing solvers only handle strings over finite and small alphabets~\cite{Liang2014}.
\SAFAs and the techniques presented in this paper have the potential to impact the way in which
such  solvers for SMT are built as they can support sequences over arbitrary alphabet theory
and do not incur into state explosion.
Investigating ways to integrate \SAFAs into SMT solvers for sequences is an exciting and challenging
research direction we plan to pursue in the future.

\subsection{Automata learning}
The first algorithm for efficiently learning deterministic finite automata was introduced
by Dana Angluin~\cite{Angluin87}.
Since then, automata learning has found many applications in program verification and program
synthesis~\cite{Alur05,Yuan14}.
Recently, algorithms have been proposed to learn alternating finite automata~\cite{Angluin15}.
Similarly to Angluin's original algorithm, the one proposed in~\cite{Angluin15} uses a learning model in which
the learner, who is trying to learn an automaton for describing an unknown target language $L$, is allowed to query the teachers and ask two types of queries.
\begin{description}
\item[Equivalence queries] Given an automaton $A$, does $A$ accepts exactly the language $L$.
\item[Membership queries] Given a string $s$, is the string $s$ in the language $L$.
\end{description} 
In a model in which the language $L$ is specified as an automaton itself (this is for example the case in~\cite{bb13sigma})
being able to efficiently perform such queries is crucial for efficient learning.
The equivalence algorithm proposed in this paper has the potential to make the problem of learning
alternating automata more efficient.

\subsection{\SAFA minimization}
Automata minimization is a crucial operation for mitigating state space explosion. 
For example, one of the ``secrets'' that made the monadic second-order logic solver Mona~\cite{Henriksen95} practical
was to eagerly perform automata minimization when building intermediate automata corresponding to sub-formulas of the input formula.
Recently, minimization and state reduction algorithms have been proposed for \SFAs~\cite{DAntoni14} and nondeterministic automata~\cite{Mayr13}.
However, the problem of reducing the state space of alternating automata has not been explored yet.
Interestingly, when performing the \LTLf to \SAFA reduction during our evaluation in Section~\ref{sec:eval-ltl}
we observed that many states of the produced \SAFAs were redundant because the same sub-formulas (e.g., $X a$) appeared multiple 
times in the input \LTLf formula. While we solved this problem by hashing sub-formulas, it would have been better to automatically detect
such redundancies using a state reduction algorithm for \SAFAs.
Automata minimization and bisimulations are tightly related concepts and we are confident that the notions
presented in this paper can be of aid in designing efficient algorithms for reducing the number of states in \SAFAs and in general
in alternating automata.

\subsection{List manipulating programs}

Since \SAFAs{} support efficient Boolean operations (\cref{sec:bool}), they
offer an intruiging possibility of acting as the predicates of an effective
Boolean algebra.  That is, we can consider \SAFAs{} over an effective Boolean
algebra where characters are strings and the predicates are themselves
\SAFAs{}.  Naturally, this process can be iterated indefinitely (strings of
strings of strings of strings ...).

One compelling use-case for such technology is in expressing and verifying
correctness properties of list-manipulating programs.  For example, a sparse
representation of a $10 \times 10$ matrix of integers might be represented by
type of the form:
\begin{quote}
\noindent\textbf{type} pos = $\{ p : {\rm int} \mid 0 \leq p < 10 \}$\\
\textbf{type} row = $\{ r : ({\rm pos} \times {\rm int})\;{\rm list} \mid {\rm length}(r) \leq 10 \}$\\
\textbf{type} matrix = $\{ m : ({\rm pos} \times {\rm row})\;{\rm list} \mid {\rm length}(m) \leq 10 \}$
\end{quote}
One could imagine that the decision procedure presented in
\cref{sec:equivalence} might be useful for type checking in a language with
such a rich type system (e.g., consider the problem of checking that matrix
addition preserves the sparse matrix invariants).

\section{Related work}  \label{sec:related}

\paragraph{Automata with predicates}
The concept of automata with predicates instead of concrete symbols
was first mentioned in~\cite{Watson99} and was first discussed
in~\cite{Noord01} in the context of natural language processing.
\SFAs were then formally introduced in \cite{Veanes10}
and then adopted in~\cite{Veanes12} with a focus on security analysis of
sanitizers. 
D'Antoni et al. studied alternation for symbolic \emph{tree} automata,
but all their techniques are based on classic algorithms for eliminating alternation
and reductions to non-alternating automata~\cite{Dantoni15toplas}. 
Our approach is different from the one in~\cite{Dantoni15toplas} as our equivalence procedure
does not need to build the \SFA corresponding to a \SAFA. 
The Mona
implementation~\cite{Klarlund02} provides decision procedures for monadic
second-order logic, and it relies on a highly-optimized BDD-based
representation for automata which has seen extensive engineering
effort~\cite{Klarlund02}. Therefore, the use of BDDs in the context of automata is not new, but
is used here as an example of a Boolean algebra that seems
particularly well suited for working with the alternating automata
generated by \LTL formulas.

\paragraph{Alternating automata}
Alternation is an old concept in computer science and
and the notion of alternating automata dates back to the 80s~\cite{BrzozowskiL80, Chandra81}. 
Vardi recognized the potential of such a model in program verification,
in spite of their high theoretical complexities~\cite{Vardi95}.
Alaska was the one of the first practical implementation of alternating automata~\cite{DeWulf2008}.
In Alaska, the alphabet and the set of states are both represented using bit-vectors
and this allows to model the search space using BDD. While this representation is somewhat
similar to ours, \SAFAs are much more modular because they support arbitrary alphabets and
alphabet representation
(not just bit-vectors and BDDs)
and arbitrary state representations (again not just BDDs).
Alaska performs state-space reduction using antichains while checking
AFA emptiness.  As observed by Bonchi and Pous~\cite{Bonchi2013}, bisimulation up to congruence strictly subsumes antichain reduction.

\paragraph{Equivalence using bisimulation up to}
Notions of bisimulations  similar to the one presented in this paper have been studied 
in the past~\cite{Caucal90,Hirshfeld96,Lucanu2009} in different context that did not
related to language equivalence.
The paper that most relates to ours is by Bonchi and Pous~\cite{Bonchi2013},
where the idea of bisimulation up to congruence is used to check equivalence and inclusion
of non-deterministic finite state automata (NFAs).
There are two main differences with the ideas we presented
here. First, generalizing bisimulation up to congruence to alternating finite automata is absolutely
non-trivial. While in the case of  NFAs, the congruence closure can be computed in polynomial time
with a simple saturation algorithm, this is not the case for AFAs. 
In fact, for AFAs the problem becomes NP-complete.
Second, the techniques in~\cite{Bonchi2013} are described for NFAs operating over finite
alphabets and most of the presented examples operate over alphabets of size two.
We demonstrated a technique for extending the bisimulation up to congruence technique to
symbolic and potentially infinite alphabets using a representative enumeration algorithm reminiscent of AllSat solving. In fact, our technique can also be used to extend
the techniques proposed in~\cite{Bonchi2013} to arbitrary domains.

\section{Conclusion}
We presented \emph{symbolic alternating finite automata}, \SAFAs, a succinct and efficient automaton
model for describing sets of finite sequences over large and potentially infinite alphabets.
We also introduced an algorithm for checking the equivalence of two \SAFAs based on bisimulation up to congruence
and showed how this algorithm often outperforms other automaton models in two concrete applications.
First, our algorithm can be used to efficiently check satisfiability of linear temporal logic formulas over finite traces.
Second, our algorithm can efficiently checking  whether different Boolean combinations of regular expressions
describe the same language.

\bibliographystyle{abbrvnat}
\bibliography{references}

\begin{thebibliography}{37}
\providecommand{\natexlab}[1]{#1}
\providecommand{\url}[1]{\texttt{#1}}
\expandafter\ifx\csname urlstyle\endcsname\relax
  \providecommand{\doi}[1]{doi: #1}\else
  \providecommand{\doi}{doi: \begingroup \urlstyle{rm}\Url}\fi

\bibitem[reg()]{regexlib}
Regular expression library.
\newblock \url{http://www.regexlib.com/}.

\bibitem[Alur et~al.(2005)Alur, \v{C}ern\'{y}, Madhusudan, and Nam]{Alur05}
R.~Alur, P.~\v{C}ern\'{y}, P.~Madhusudan, and W.~Nam.
\newblock Synthesis of interface specifications for java classes.
\newblock \emph{SIGPLAN Not.}, 40\penalty0 (1):\penalty0 98--109, Jan. 2005.
\newblock ISSN 0362-1340.
\newblock \doi{10.1145/1047659.1040314}.
\newblock URL \url{http://doi.acm.org/10.1145/1047659.1040314}.

\bibitem[Alur et~al.(2015)Alur, D'Antoni, and Raghothaman]{Alur15}
R.~Alur, L.~D'Antoni, and M.~Raghothaman.
\newblock Drex: A declarative language for efficiently evaluating regular
  string transformations.
\newblock \emph{SIGPLAN Not.}, 50\penalty0 (1):\penalty0 125--137, Jan. 2015.
\newblock ISSN 0362-1340.
\newblock \doi{10.1145/2775051.2676981}.
\newblock URL \url{http://doi.acm.org/10.1145/2775051.2676981}.

\bibitem[Angluin(1987)]{Angluin87}
D.~Angluin.
\newblock Learning regular sets from queries and counterexamples.
\newblock \emph{Inf. Comput.}, 75\penalty0 (2):\penalty0 87--106, Nov. 1987.
\newblock ISSN 0890-5401.
\newblock \doi{10.1016/0890-5401(87)90052-6}.
\newblock URL \url{http://dx.doi.org/10.1016/0890-5401(87)90052-6}.

\bibitem[Angluin et~al.(2015)Angluin, Eisenstat, and Fisman]{Angluin15}
D.~Angluin, S.~Eisenstat, and D.~Fisman.
\newblock Learning regular languages via alternating automata.
\newblock In \emph{Proceedings of the 24th International Conference on
  Artificial Intelligence}, IJCAI'15, pages 3308--3314. AAAI Press, 2015.
\newblock ISBN 978-1-57735-738-4.
\newblock URL \url{http://dl.acm.org/citation.cfm?id=2832581.2832710}.

\bibitem[Bonchi and Pous(2013)]{Bonchi2013}
F.~Bonchi and D.~Pous.
\newblock Checking nfa equivalence with bisimulations up to congruence.
\newblock In \emph{Proceedings of the 40th Annual ACM SIGPLAN-SIGACT Symposium
  on Principles of Programming Languages}, POPL '13, pages 457--468, New York,
  NY, USA, 2013. ACM.

\bibitem[Botin\v{c}an and Babi\'c(2013)]{bb13sigma}
M.~Botin\v{c}an and D.~Babi\'c.
\newblock {Sigma*: Symbolic Learning of Input-Output Specifications}.
\newblock In \emph{POPL'13: Proceedings of the 40th ACM SIGPLAN-SIGACT
  Symposium on Principles of Programming Languages}, pages 443--456, New York,
  NY, USA, 2013. ACM.

\bibitem[Brzozowski and Leiss(1980)]{BrzozowskiL80}
J.~A. Brzozowski and E.~L. Leiss.
\newblock On equations for regular languages, finite automata, and sequential
  networks.
\newblock \emph{Theor. Comput. Sci.}, 10:\penalty0 19--35, 1980.
\newblock \doi{10.1016/0304-3975(80)90069-9}.
\newblock URL \url{http://dx.doi.org/10.1016/0304-3975(80)90069-9}.

\bibitem[Caucal(1990)]{Caucal90}
D.~Caucal.
\newblock Graphes canoniques de graphes alg{\'{e}}briques.
\newblock \emph{{ITA}}, 24:\penalty0 339--352, 1990.

\bibitem[Chandra et~al.(1981)Chandra, Kozen, and Stockmeyer]{Chandra81}
A.~K. Chandra, D.~C. Kozen, and L.~J. Stockmeyer.
\newblock Alternation.
\newblock \emph{J. ACM}, 28\penalty0 (1):\penalty0 114--133, Jan. 1981.
\newblock ISSN 0004-5411.
\newblock \doi{10.1145/322234.322243}.
\newblock URL \url{http://doi.acm.org/10.1145/322234.322243}.

\bibitem[Cimatti et~al.(1999)Cimatti, Clarke, Giunchiglia, and
  Roveri]{Cimatti99}
A.~Cimatti, E.~M. Clarke, F.~Giunchiglia, and M.~Roveri.
\newblock Nusmv: A new symbolic model verifier.
\newblock In \emph{Proceedings of the 11th International Conference on Computer
  Aided Verification}, CAV '99, pages 495--499, London, UK, UK, 1999.
  Springer-Verlag.
\newblock ISBN 3-540-66202-2.
\newblock URL \url{http://dl.acm.org/citation.cfm?id=647768.733923}.

\bibitem[Dalla~Preda et~al.(2015)Dalla~Preda, Giacobazzi, Lakhotia, and
  Mastroeni]{DallaPreda15}
M.~Dalla~Preda, R.~Giacobazzi, A.~Lakhotia, and I.~Mastroeni.
\newblock Abstract symbolic automata: Mixed syntactic/semantic similarity
  analysis of executables.
\newblock \emph{SIGPLAN Not.}, 50\penalty0 (1):\penalty0 329--341, Jan. 2015.
\newblock ISSN 0362-1340.
\newblock \doi{10.1145/2775051.2676986}.
\newblock URL \url{http://doi.acm.org/10.1145/2775051.2676986}.

\bibitem[Daniele et~al.(1999)Daniele, Giunchiglia, and Vardi]{Daniele99}
M.~Daniele, F.~Giunchiglia, and M.~Y. Vardi.
\newblock Improved automata generation for linear temporal logic.
\newblock In \emph{Proceedings of the 11th International Conference on Computer
  Aided Verification}, CAV '99, pages 249--260, London, UK, UK, 1999.
  Springer-Verlag.
\newblock ISBN 3-540-66202-2.
\newblock URL \url{http://dl.acm.org/citation.cfm?id=647768.733938}.

\bibitem[D'Antoni and Veanes(2014)]{DAntoni14}
L.~D'Antoni and M.~Veanes.
\newblock Minimization of symbolic automata.
\newblock \emph{SIGPLAN Not.}, 49\penalty0 (1):\penalty0 541--553, Jan. 2014.
\newblock ISSN 0362-1340.
\newblock \doi{10.1145/2578855.2535849}.
\newblock URL \url{http://doi.acm.org/10.1145/2578855.2535849}.

\bibitem[D'antoni et~al.(2015)D'antoni, Veanes, Livshits, and
  Molnar]{Dantoni15toplas}
L.~D'antoni, M.~Veanes, B.~Livshits, and D.~Molnar.
\newblock Fast: A transducer-based language for tree manipulation.
\newblock \emph{ACM Trans. Program. Lang. Syst.}, 38\penalty0 (1):\penalty0
  1:1--1:32, Oct. 2015.
\newblock ISSN 0164-0925.
\newblock \doi{10.1145/2791292}.
\newblock URL \url{http://doi.acm.org/10.1145/2791292}.

\bibitem[De~Giacomo and Vardi(2013)]{DeGiacomo13}
G.~De~Giacomo and M.~Y. Vardi.
\newblock Linear temporal logic and linear dynamic logic on finite traces.
\newblock In \emph{Proceedings of the Twenty-Third International Joint
  Conference on Artificial Intelligence}, IJCAI '13, pages 854--860. AAAI
  Press, 2013.
\newblock ISBN 978-1-57735-633-2.
\newblock URL \url{http://dl.acm.org/citation.cfm?id=2540128.2540252}.

\bibitem[De~Moura and Bj{\o}rner(2008)]{DeMoura08}
L.~De~Moura and N.~Bj{\o}rner.
\newblock Z3: An efficient smt solver.
\newblock In \emph{Proceedings of the Theory and Practice of Software, 14th
  International Conference on Tools and Algorithms for the Construction and
  Analysis of Systems}, TACAS'08/ETAPS'08, pages 337--340, Berlin, Heidelberg,
  2008. Springer-Verlag.
\newblock ISBN 3-540-78799-2, 978-3-540-78799-0.
\newblock URL \url{http://dl.acm.org/citation.cfm?id=1792734.1792766}.

\bibitem[De~Wulf et~al.(2008)De~Wulf, Doyen, Maquet, and Raskin]{DeWulf2008}
M.~De~Wulf, L.~Doyen, N.~Maquet, and J.~F. Raskin.
\newblock \emph{Antichains: Alternative Algorithms for {LTL} Satisfiability and
  Model-Checking}, pages 63--77.
\newblock Springer Berlin Heidelberg, Berlin, Heidelberg, 2008.
\newblock ISBN 978-3-540-78800-3.
\newblock \doi{10.1007/978-3-540-78800-3_6}.
\newblock URL \url{http://dx.doi.org/10.1007/978-3-540-78800-3_6}.

\bibitem[Harding(2005)]{Harding05}
A.~Harding.
\newblock Symbolic strategy synthesis for games with {LTL} winning conditions.
\newblock Technical report, 2005.

\bibitem[Henriksen et~al.(1995)Henriksen, Jensen, J{\o}rgensen, Klarlund,
  Paige, Rauhe, and Sandholm]{Henriksen95}
J.~Henriksen, J.~Jensen, M.~J{\o}rgensen, N.~Klarlund, B.~Paige, T.~Rauhe, and
  A.~Sandholm.
\newblock Mona: Monadic second-order logic in practice.
\newblock In \emph{TACAS '95}, volume 1019 of \emph{LNCS}. Springer, 1995.

\bibitem[Hirshfeld et~al.(1996)Hirshfeld, Jerrum, and Moller]{Hirshfeld96}
Y.~Hirshfeld, M.~Jerrum, and F.~Moller.
\newblock A polynomial algorithm for deciding bisimilarity of normed
  context-free processes.
\newblock \emph{Theor. Comput. Sci.}, 158\penalty0 (1-2):\penalty0 143--159,
  May 1996.
\newblock ISSN 0304-3975.
\newblock \doi{10.1016/0304-3975(95)00064-X}.
\newblock URL \url{http://dx.doi.org/10.1016/0304-3975(95)00064-X}.

\bibitem[Hopcroft and Karp(1971)]{Hopcroft1971}
J.~E. Hopcroft and R.~M. Karp.
\newblock A linear algorithm for testing equivalence of finite automata.
\newblock Technical report, Cornell University, 1971.

\bibitem[Klarlund et~al.(2002)Klarlund, M\o{}ller, and
  Schwartzbach]{Klarlund02}
N.~Klarlund, A.~M\o{}ller, and M.~I. Schwartzbach.
\newblock {MONA} implementation secrets.
\newblock \emph{International Journal of Foundations of Computer Science},
  13\penalty0 (4):\penalty0 571--586, 2002.

\bibitem[Le~Guernic et~al.(2007)Le~Guernic, Banerjee, Jensen, and
  Schmidt]{LeGuernic2007}
G.~Le~Guernic, A.~Banerjee, T.~Jensen, and D.~A. Schmidt.
\newblock \emph{Automata-Based Confidentiality Monitoring}, pages 75--89.
\newblock Springer Berlin Heidelberg, Berlin, Heidelberg, 2007.
\newblock ISBN 978-3-540-77505-8.
\newblock \doi{10.1007/978-3-540-77505-8_7}.
\newblock URL \url{http://dx.doi.org/10.1007/978-3-540-77505-8_7}.

\bibitem[Liang et~al.(2014)Liang, Reynolds, Tinelli, Barrett, and
  Deters]{Liang2014}
T.~Liang, A.~Reynolds, C.~Tinelli, C.~Barrett, and M.~Deters.
\newblock \emph{A DPLL(T) Theory Solver for a Theory of Strings and Regular
  Expressions}, pages 646--662.
\newblock Springer International Publishing, Cham, 2014.
\newblock ISBN 978-3-319-08867-9.
\newblock \doi{10.1007/978-3-319-08867-9_43}.
\newblock URL \url{http://dx.doi.org/10.1007/978-3-319-08867-9_43}.

\bibitem[Lucanu and Ro{\c{s}}u(2009)]{Lucanu2009}
D.~Lucanu and G.~Ro{\c{s}}u.
\newblock \emph{Circular Coinduction with Special Contexts}, pages 639--659.
\newblock Springer Berlin Heidelberg, Berlin, Heidelberg, 2009.
\newblock ISBN 978-3-642-10373-5.
\newblock \doi{10.1007/978-3-642-10373-5_33}.
\newblock URL \url{http://dx.doi.org/10.1007/978-3-642-10373-5_33}.

\bibitem[Mayr and Clemente(2013)]{Mayr13}
R.~Mayr and L.~Clemente.
\newblock Advanced automata minimization.
\newblock \emph{SIGPLAN Not.}, 48\penalty0 (1):\penalty0 63--74, Jan. 2013.
\newblock ISSN 0362-1340.
\newblock \doi{10.1145/2480359.2429079}.
\newblock URL \url{http://doi.acm.org/10.1145/2480359.2429079}.

\bibitem[McMillan(2002)]{McMillan2002}
K.~L. McMillan.
\newblock Applying sat methods in unbounded symbolic model checking.
\newblock In \emph{Proceedings of the 14th International Conference on Computer
  Aided Verification}, CAV '02, pages 250--264, London, UK, UK, 2002.
  Springer-Verlag.
\newblock ISBN 3-540-43997-8.
\newblock URL \url{http://dl.acm.org/citation.cfm?id=647771.734421}.

\bibitem[Rozier and Vardi(2007)]{Rozier2007}
K.~Y. Rozier and M.~Y. Vardi.
\newblock \emph{{LTL} Satisfiability Checking}, pages 149--167.
\newblock Springer Berlin Heidelberg, Berlin, Heidelberg, 2007.
\newblock ISBN 978-3-540-73370-6.
\newblock \doi{10.1007/978-3-540-73370-6_11}.
\newblock URL \url{http://dx.doi.org/10.1007/978-3-540-73370-6_11}.

\bibitem[Smith et~al.(2008)Smith, Estan, Jha, and Kong]{Smith08}
R.~Smith, C.~Estan, S.~Jha, and S.~Kong.
\newblock Deflating the big bang: Fast and scalable deep packet inspection with
  extended finite automata.
\newblock In \emph{Proceedings of the ACM SIGCOMM 2008 Conference on Data
  Communication}, SIGCOMM '08, pages 207--218, New York, NY, USA, 2008. ACM.
\newblock ISBN 978-1-60558-175-0.
\newblock \doi{10.1145/1402958.1402983}.
\newblock URL \url{http://doi.acm.org/10.1145/1402958.1402983}.

\bibitem[van Noord and Gerdemann(2001)]{Noord01}
G.~van Noord and D.~Gerdemann.
\newblock Finite state transducers with predicates and identities.
\newblock \emph{Grammars}, 4\penalty0 (3):\penalty0 263--286, 2001.

\bibitem[Vardi(1995)]{Vardi95}
M.~Y. Vardi.
\newblock Alternating automata and program verification.
\newblock In \emph{Computer Science Today: Recent Trends and Developments},
  pages 471--485. 1995.
\newblock \doi{10.1007/BFb0015261}.
\newblock URL \url{http://dx.doi.org/10.1007/BFb0015261}.

\bibitem[Veanes et~al.(2010{\natexlab{a}})Veanes, Bj{\o}rner, and
  De~Moura]{Veanes10lpar}
M.~Veanes, N.~Bj{\o}rner, and L.~De~Moura.
\newblock Symbolic automata constraint solving.
\newblock In \emph{Proceedings of the 17th International Conference on Logic
  for Programming, Artificial Intelligence, and Reasoning}, LPAR'10, pages
  640--654, Berlin, Heidelberg, 2010{\natexlab{a}}. Springer-Verlag.
\newblock ISBN 3-642-16241-X, 978-3-642-16241-1.
\newblock URL \url{http://dl.acm.org/citation.cfm?id=1928380.1928425}.

\bibitem[Veanes et~al.(2010{\natexlab{b}})Veanes, de~Halleux, and
  Tillmann]{Veanes10}
M.~Veanes, P.~de~Halleux, and N.~Tillmann.
\newblock Rex: Symbolic regular expression explorer.
\newblock In \emph{ICST'10}, pages 498--507. IEEE, 2010{\natexlab{b}}.

\bibitem[Veanes et~al.(2012)Veanes, Hooimeijer, Livshits, Molnar, and
  Bjorner]{Veanes12}
M.~Veanes, P.~Hooimeijer, B.~Livshits, D.~Molnar, and N.~Bjorner.
\newblock Symbolic finite state transducers: Algorithms and applications.
\newblock \emph{SIGPLAN Not.}, 47\penalty0 (1):\penalty0 137--150, Jan. 2012.
\newblock ISSN 0362-1340.
\newblock \doi{10.1145/2103621.2103674}.
\newblock URL \url{http://doi.acm.org/10.1145/2103621.2103674}.

\bibitem[Watson(1999)]{Watson99}
B.~W. Watson.
\newblock Implementing and using finite automata toolkits.
\newblock In \emph{Extended finite state models of language}, pages 19--36, New
  York, NY, USA, 1999. Cambridge University Press.

\bibitem[Yuan et~al.(2014)Yuan, Alur, and Loo]{Yuan14}
Y.~Yuan, R.~Alur, and B.~T. Loo.
\newblock Netegg: Programming network policies by examples.
\newblock In \emph{Proceedings of the 13th ACM Workshop on Hot Topics in
  Networks}, HotNets-XIII, pages 20:1--20:7, New York, NY, USA, 2014. ACM.
\newblock ISBN 978-1-4503-3256-9.
\newblock \doi{10.1145/2670518.2673879}.
\newblock URL \url{http://doi.acm.org/10.1145/2670518.2673879}.

\end{thebibliography}

\end{document}